\documentclass[11pt,a4paper]{article}
\usepackage[T1]{fontenc}
\usepackage[utf8]{inputenc}
\usepackage{lmodern}
\usepackage{xspace}
\usepackage{authblk}
\usepackage[margin=27mm]{geometry}
\usepackage{amsmath,amssymb,amsthm}
\usepackage{thm-restate}
\usepackage{bbold,bm,float}
\usepackage{needspace}
\usepackage[ruled,vlined]{algorithm2e}
\usepackage[colorlinks=true]{hyperref}
\usepackage[nameinlink,capitalize]{cleveref}
\newtheorem{theorem}{Theorem}[section]

\newtheorem{lemma}[theorem]{Lemma}
\newtheorem{corollary}[theorem]{Corollary}
\newtheorem{proposition}[theorem]{Proposition}
\newtheorem{claim}[theorem]{Claim}
\theoremstyle{definition}
\newtheorem{question}[theorem]{Question}
\theoremstyle{plain}
\crefname{question}{Question}{Questions}
\Crefname{question}{Question}{Questions}
\crefname{claim}{Claim}{Claims}
\Crefname{claim}{Claim}{Claims}
\crefname{figure}{Figure}{Figures}
\Crefname{figure}{Figure}{Figures}
\crefname{equation}{Equation}{Equations}
\Crefname{equation}{Equation}{Equations}
\newenvironment{claimproof}[1]{\begin{proof}[\proofname]}{\end{proof}}
\DeclareMathOperator{\mad}{mad}
\DeclareMathOperator{\arb}{arb}
\DeclareMathOperator{\MDS}{MDS}

\newcommand{\LOCAL}{\textsf{LOCAL}\xspace}
\newcommand{\CONGEST}{\textsf{CONGEST}\xspace}
\xspaceaddexceptions{\textsubscript}
\newcommand{\MinDS}{\textsc{Minimum Dominating Set}\xspace}

\newcommand{\cC}{\mathcal{C}}
\newcommand{\cD}{\mathcal{D}}

\newcommand{\cH}{\mathcal{H}}

\usepackage[dvipsnames]{xcolor}
\definecolor{primary}{HTML}{255077}
\colorlet{secondary}{Dandelion}
\definecolor{tertiary}{HTML}{fe4e63}

\colorlet{mygreen}{YellowGreen}
\colorlet{mydarkgreen}{ForestGreen}
\definecolor{myblack}{HTML}{255077}
\colorlet{myorange}{Dandelion}
\colorlet{myblue}{CornflowerBlue}
\colorlet{mywashedblue}{myblue!60!white}
\colorlet{mypink}{Lavender}
\colorlet{mywashedpink}{mypink!60!white}
\colorlet{mypurple}{violet}
\colorlet{mydarkblue}{RoyalBlue}
\definecolor{myred}{HTML}{fe4e63}

\colorlet{revisionred}{myred!60!black}

\colorlet{rewriteblue}{primary!70!myblue}\colorlet{rewritepurple}{mypurple!45!primary!85!white} 
\DeclareRobustCommand{\rewritewording}[1]{\textcolor{rewriteblue}{#1}}
\DeclareRobustCommand{\unchangedwording}[1]{\textcolor{rewritepurple}{#1}}
\usepackage{tikz}
\usetikzlibrary{math,patterns,positioning,shapes,arrows,arrows.meta,graphs,graphs.standard,decorations.pathmorphing,decorations.markings,fit,tikzmark,backgrounds,calc,hobby}

\tikzset{
    vertex/.style={circle, fill=white, line width=2pt, draw=primary, outer sep=0.2em}
}
\tikzset{
    fvertex/.style={circle, fill=primary!50!black, line width=2pt, draw=white}
}
\tikzset{
    edge/.style={very thick, draw=primary!50!black}
}

\newcommand{\R}{\mathbb{R}}
\newcommand{\indicator}{\mathbb{1}}
\newcommand{\eps}{\varepsilon}
\title{Almost tight distributed approximation for domination in sparse graph classes}

\author[1]{Timoth{\'e} Picavet}
\affil[1]{Gran Sasso Science Institute, L'Aquila, Italy}
\date{}
\hypersetup{
  allcolors=primary,
  pdftitle={Almost tight distributed approximation for domination in sparse graph classes},
  pdfauthor={Timoth{\'e} Picavet}
}

\begin{document}
\maketitle

\begin{abstract}
    We study distributed approximation of \MinDS in a very general class of sparse graphs, extending near-optimal approximation results from the minor-free setting to bounded expansion.  
    In more detail, for fixed $d$ and $\eps>0$, we give a new deterministic constant-round \LOCAL algorithm for \MinDS on graphs of maximum average degree at most $d$ and bounded depth-one shallow-minor density.
    Its approximation ratio is $d+1+\eps$, which is optimal up to $\eps$ for every even integer $d\geq 2$.
    This improves on the recent results of Amiri and Siebertz by extending beyond proper minor-closed classes.
    From this algorithm, we derive almost-tight constant-round approximations for a number of graph classes. 
    We get $(7+\eps)$-approximations for planar graphs and graphs of bounded Euler genus, and $(5+\eps)$-approximations for their triangle-free subclasses.
    This answers the $7+\eps$ question of Wawrzyniak positively, and matches the recent results of Amiri and Siebertz.
    We also apply our algorithm to classes of bounded queue-number or stack-number, obtaining bounds that are tight up to $\eps$.
    Additionally, we answer questions of Amiri and Siebertz about improving the dependence of the round complexity on $1/\eps$ and obtaining polynomial local computation for fixed parameters, when comparing the output to the usual domination number.
\end{abstract}

\section{Introduction}

A \emph{dominating set} of a graph $G$ is a set of vertices whose closed neighborhoods cover $V(G)$.
The \MinDS problem asks for such a set of minimum size, denoted by $\MDS(G)$.
In distributed networks, computing dominating sets is very useful and an important research topic.
For instance, it can be used to organize a network around a small number of active vertices~\cite{KuhnWattenhofer2005}.
In wireless sensor networks, selecting only a few vertices that are active saves energy, by sending nodes outside of the dominating set to sleep~\cite[Chapter~7]{Krishnamachari2005}.
Studying this seemingly basic problem is also a way to quantify what can be achieved with local information.
This is interesting because, even though checking if some vertex is dominated can be easily done locally, globally minimizing the size of the dominating set is a difficult problem.
We focus on constant-round \LOCAL algorithms, where each vertex makes a choice and outputs a return value.
This choice depends only on the topology of the graph and identifiers within a bounded distance, independently of the network size.
For \MinDS, this means that each vertex chooses to be in or outside the returned set, with the guarantee that the returned set dominates the whole graph.
Even on cycles, exactly solving \MinDS requires a nonconstant number of rounds.
Thus, we restrict our attention to approximation algorithms on special graph classes.

\paragraph{Previous results.}
To explain how our result came to be, we first focus on the earlier work that motivated it.
We compare with the recent results of Amiri and Siebertz~\cite{AmiriSiebertz2026} and Wawrzyniak~\cite{Wawrzyniak2026} later on in~\cref{sec:comparison}.

Although lots of special graph classes were studied in the context of distributed domination, such as planar graphs~\cite{CzygrinowEtAl2008,LenzenEtAl2013,Wawrzyniak2014,HilkeEtAl2014,Wawrzyniak2015,HeydtEtAl2022,HeydtEtAl2025,Wawrzyniak2025,Wawrzyniak2026,AmiriSiebertz2026} and outerplanar graphs~\cite{BonamyEtAl2021Outerplanar}, an important step towards a more general result was made by Heydt et al.~\cite[Theorem~2.1]{HeydtEtAl2025}.
They prove that constant-factor approximations in constant rounds can be achieved for graphs of bounded expansion, which is one of the most permissive definitions for sparsity.
While this was a huge step in understanding how \MinDS behaves in sparse graph classes, the proven approximation ratios were huge.
For comparison, bounded-genus graphs were known to admit a $(34+\eps)$-approximation independent of the genus~\cite{BonamyEtAl2026}. Very recently, the ratio was improved to $(25+\eps)$ by Wawrzyniak~\cite[Corollary~7.1]{Wawrzyniak2026}, and then $7+\eps$ by Amiri and Siebertz~\cite[Theorem~8.4]{AmiriSiebertz2026}.
Heydt et al.'s algorithm for $K_{3,t}$-subgraph-free graphs gives a factor $O(1+\sqrt g)$~\cite[Theorem~2.2]{HeydtEtAl2025} for graphs of Euler genus at most $g$.
More generally, if $\nabla_1(G)$ is the largest edge-to-vertex ratio of a simple graph obtained from a subgraph of $G$ by contracting pairwise disjoint stars to vertices, Heydt et al.~\cite[Theorem~2.1]{HeydtEtAl2025} obtain $\nabla_1^{O(\nabla_1^2)}$.
This quantity is more than exponential in $\nabla_1$. 
This leads to the following very natural question: is this bound tight, and if not, how good an approximation can we obtain? Additionally, which density parameter truly determines the approximation ratio?
Is it $\nabla_1$, or is it another parameter?

\paragraph{Why maximum average degree?}
Some familiar graph classes already give a glimpse of the answer.
The optimal ratio is $3$ for trees\footnote{The algorithm is folklore: just select every non-leaf if the graph is not an edge. This gives a $3$-approximation.} and $5$ for outerplanar graphs~\cite{BonamyEtAl2021Outerplanar}, while a ratio below $7$ is impossible on planar graphs~\cite{HilkeEtAl2014}.
Now compare this to the maximum average degree $\mad$ in these three classes. Very quickly, $\mad(G)=\max_{\varnothing\neq H\subseteq G}2|E(H)|/|V(H)|$ is the largest average degree of a nonempty subgraph of $G$, and it can also be defined on a class of graphs by taking the supremum.
For trees, outerplanar and planar graphs, we get a maximum average degree of $2$, $4$, and $6$, respectively.
These numbers suggest that the right approximation factor should be, very naturally, the maximum average degree of the class plus one.

The lower bound techniques currently known for approximation ratios also give the same intuition through a different reasoning.
Roughly, these techniques consist of building a graph where the view of almost every vertex is the same in any constant-round \LOCAL algorithm.
Therefore, almost every vertex makes the same choice.
They have two possible choices: not being included in the returned set, or being included.
Because we require the returned set to be dominating, this forces us to select almost all vertices.\footnote{Here, we leave out the construction of the identifier assignment. However, this can be made formal with the help of Ramsey's theorem.}
Now consider the power of a path $P_n^r$ example with $d=2r$. Almost all vertices have degree $d$.
Choose one vertex every $d+1$ vertices in our a dominating set (apart from constant-size intervals at both ends). 
This gives a dominating set of size at most $|V(G)|/(d+1)+O(1)$.
Conversely, every dominating set has size at least $n/(d+1)$, since each vertex can only dominate at most $d+1$ vertices.
Hence, we get a lower bound $d+1$~\cite{Picavet2026Thesis,AmiriSiebertz2026}. This is also the best that can be achieved with this technique, without developing significant new ideas.

\paragraph{The role of shallow-minor density.}
Thus, maximum average degree seems to be an important ingredient of the final answer.
However, having bounded maximum average degree alone is not enough to admit a constant-factor constant-round approximation.
Indeed, Dory, Ghaffari, and Ilchi~\cite{DoryEtAl2024} showed that even graphs of arboricity two (and thus maximum average degree less than $4$) do not admit a constant-factor approximation in a constant number of rounds.
Therefore, the question is now: what is the right additional property that is needed, on top of a bound on the maximum average degree?
Heydt et al.~\cite{HeydtEtAl2025} had already shown that bounding $\nabla_1$ is enough to obtain such an algorithm.
This also suggests that a bound on $\nabla_1$ is part of the answer. Even more interestingly, this can be in some way thought of as tight.
Indeed, relaxing this $\nabla_1$ to its most natural weakening, namely $\nabla_0$, which is half the maximum average degree, is not possible: the lower bound above rules it out.

\paragraph{Our contributions.}
What we show is that the bound on $\nabla_1$ is only needed to bound the running time, and our approximation ratio is almost what we conjectured before: $\mad+1+\eps$ can be achieved, for any $\eps>0$. The running time of course depends on this parameter $\eps$.
Moreover, our algorithm is almost tight for every even integer $d\geq2$. 
This argument is made possible by considering powers of paths, where the best ratio that can be obtained is $d+1$, even on bounded-expansion classes of maximum average degree at most $d$~\cite{Picavet2026Thesis,AmiriSiebertz2026}.
Thus, our algorithm is tight-up-to-$\eps$.

\begin{restatable}{corollary}{madcorollary}\label{cor:mad}
    For fixed $d,\beta\geq1$ and $\eps>0$, there is a deterministic constant-round \LOCAL algorithm that computes a $(d+1+\eps)$-approximation of \MinDS on graphs $G$ with $\mad(G)\leq d$ and $\nabla_1(G)\leq\beta$.
    In particular, this algorithm applies to every bounded expansion class of maximum average degree at most $d$.
    For $\eps>0$, the round complexity is $O_{d,\beta}(\eps^{-4}\log(1/\eps)^2)$, and the local computation is polynomial for fixed $d,\beta,\eps$.
\end{restatable}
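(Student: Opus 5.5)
This corollary is a specialization of the main theorem, which is not yet stated at this point; the plan therefore also sketches that theorem. The route I would take is LP-based, with a local charging scheme.

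\textbf{Step 1: fractional solution.} Compute, in $O_{\beta}(\eps^{-2}\log(1/\eps))$-type rounds, a $(1+\eps)$-approximate fractional dominating set $x$. Since $\nabla_1(G)\le\beta$, I would use the Kuhn--Moscibroda--Wattenhofer LP scheme. The bounded $\nabla_1$ (hence bounded degeneracy-like behaviour of neighborhoods) lets the standard $O(\log\Delta)$ dependence be replaced by a constant depending only on $\beta$ and $\eps$. Where $\Delta$ is large, I would first pick the few vertices dominating many vertices of large degree. In bounded $\nabla_1$ graphs, the number of vertices with at least $2\beta$ common... high-degree dominators is $O_\beta(\MDS(G))$. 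This is the Heydt et al.\ preprocessing, and it is cheap.

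\textbf{Step 2: rounding with ratio $d+1$.} Round $x$ locally. The target inequality is
\[|D|\le (d+1+\eps)\sum_v x_v,\]
with $\sum_v x_v\le(1+\eps)\MDS(G)$. The idea is to charge each selected vertex to the fractional mass in its closed neighborhood. Each vertex $u$ with $x(N[u])\ge1$ is covered. Define $D$ as follows: select $u$ if $x_u\ge 1/(d+1)$. For each vertex not yet dominated, select a neighbor maximizing $x$ (breaking ties by ID), which has $x\ge 1/|N[w]|$. Then bound the number of such "rescue" selections by orienting $G$ with out-degree at most $\lceil d/2\rceil$ up to $\eps$-slack, which is possible since $\mad\le d$ via a local $(1+\eps)$-approximate orientation in constant rounds using $\nabla_1$. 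Charge via the orientation so that each unit of $x$ pays for at most $d+1$ selected vertices: itself plus its $d$ incident edges on average. The averaging is global, so I would need a token-distribution argument in which each vertex sends charge along edges. The bound $\mad\le d$ guarantees that a fractional orientation with in-degree at most $d/2$ exists, and it is computable approximately in $O(\eps^{-1}\log(1/\eps))$ rounds on bounded $\nabla_1$.

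\textbf{Step 3: the main obstacle.} The main difficulty is Step 2. Naive charging gives $2d+1$ or $\Delta+1$. Getting exactly $d+1$ requires that high-degree vertices, which carry much potential charge, be handled separately. They are dominated at cost $O(\eps)\MDS$ after bucketing, exploiting $\nabla_1\le\beta$: few vertices can have many private low-degree neighbors. For the rest, a local fractional-orientation argument applies. I would first try to prove that, after removing the set $H$ of vertices with $x$-weighted degree exceeding $1/\eps$, the residual graph satisfies an LP-duality bound. This bound should say that the rounded set has size at most $\sum_v (1+\deg_{\text{out}}(v))\,x_v$ summed over an orientation, which is at most $(d+1)\sum x$.

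\textbf{Step 4: complexity.} The running-time bound $O_{d,\beta}(\eps^{-4}\log(1/\eps)^2)$ would come from composing the LP phase ($\eps^{-2}\log$) with the orientation and rounding phase ($\eps^{-2}\log$). Local computation is polynomial because each step consists of explicit LP iterations and greedy rules. The final statement about bounded expansion classes follows since such classes have bounded $\nabla_1$.
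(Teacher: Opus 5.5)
Your Step~1 is where the argument breaks, and it is the core of the problem, not a preliminary.

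\textbf{Why Step~1 fails.} The Kuhn--Moscibroda--Wattenhofer algorithm (\cref{prop:covering-lp}) runs in $O(\rho^{-4}\log M\log Q)$ rounds, where $M,Q$ are the row and column sums. For the domination LP both equal $\Delta(G)+1$, and bounded $\nabla_1$ does not bound $\Delta$: stars have $\nabla_1<1$. You assert that bounded $\nabla_1$ replaces the $\log\Delta$ dependence by a constant, but you give no mechanism.

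The one you suggest, the Heydt et al.\ preprocessing that selects vertices whose neighborhood cannot be dominated by $O(\beta)$ other vertices, is not cheap in the sense you need. Its size is only bounded by $O_\beta(\MDS(G))$, not by $\eps\MDS(G)$, so it adds a constant rather than an $\eps$ to the ratio. That is precisely the step the paper has to replace.

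Moreover, if Step~1 held as stated, combining it with the threshold rounding would give $(d+1+\eps)\MDS_f(G)$. The paper explicitly leaves that open (\cref{q:fractional}), so this step cannot be treated as routine. Step~3 does not repair it: removing vertices of large $x$-weighted degree presupposes the very $x$ you cannot yet compute, and the claimed $O(\eps)\MDS(G)$ cost of dominating them is unsupported.

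\textbf{The missing idea.} You need a bounded-degree LP whose optimum is at most $(1+\rho)\MDS(G)$, measured against the integral optimum. The paper does not select hard-to-dominate vertices; it turns them into hitting constraints.
\begin{itemize}
\item Every nonempty $A$ with $|A|\le\lceil d/2\rceil$ whose common neighborhood $X(A)$ cannot be dominated by $q=\lceil2\beta+2\beta/\rho\rceil$ vertices outside $A$ becomes a hyperedge.
\item Because $q$ is large, the analysis-only set $T$ of vertices whose neighborhood needs more than $q$ vertices of a fixed optimum $D$ satisfies $|T|\le\rho\MDS(G)$ (\cref{claim:exceptional}). Moreover, $D\cup T$ hits every hyperedge.
\item The sunflower lemma sparsifies the hypergraph to bounded degree while preserving this.
\item The $K_{\lceil d/2\rceil+1,t}$-freeness forced by the density bound shows that every vertex has $O_{d,\beta,\rho}(1)$ neighbors in the residual set $R_0$ left uncovered by the hyperedges' common neighborhoods (\cref{claim:residual-degree}).
\item Keep $V(\cH)\cup R_0$ plus one representative per $R_0$-neighborhood. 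By neighborhood complexity this graph has $O(\MDS(G))$ vertices, so removing those of degree above $\Gamma$ costs only $\rho\MDS(G)$.
\item Only then does \cref{prop:covering-lp} apply, with $M,Q=\mathrm{poly}(1/\rho)$. This is where $\rho^{-4}\log(1/\rho)^2$ comes from; your Step~4 is wrong here, since sequential phases add their round counts rather than multiply them.
\end{itemize}

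\textbf{Step~2 is not the real obstacle.} Select all vertices with $x_v\ge1/(d+1)$ and let $U$ be the undominated requests; your variant of adding a neighbor of each vertex of $U$ is also fine. A fractional orientation of in-degree at most $d/2$ exists because $\mad(G)\le d$, and it is used only in the analysis, so nothing has to be computed. Charge each edge's contribution to $\sum_{u\in U}x(N(u))$ to its head $b$: at most $x_b$, plus $1/(d+1)$ when $b\in U$. Combined with $\sum_{u\in U}x(N(u))\ge\frac{d}{d+1}|U|$, this gives $|U|\le(d+1)\,x(N[U])$ directly. The threshold integral in \cref{lem:rounding} is the same computation.
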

This is the case $b=0$ of a slightly more general theorem, \cref{thm:main}.
For planar graphs, this gives a $7+\eps$ ratio, and for triangle-free planar graphs it gives $5+\eps$, matching the results of Amiri and Siebertz~\cite{AmiriSiebertz2026} with a faster run time.
Both approximation ratios are also generalized to the bounded genus setting.

Our algorithm performs local computation that is polynomial in the network size for fixed $d,\beta$, and $\eps$, but its exponent may depend on these parameters.
\Cref{fig:class-containments} shows how the graph classes covered by these results relate to one another.

This does not mean that $d+1$ is the best ratio that can be achieved for every fixed graph class.
For graphs of girth at least seven, Lenzen and Wenning~\cite{LenzenWenning2024} give a three-round algorithm with approximation ratio $O(\sqrt{\nabla_1})$.
On the high-girth members of a proper minor-closed class $\cC$, this is also sublinear in $\mad(\cC)$, since $\nabla_1(\cC)$ and $\nabla_0(\cC)$ are equal on that class.
Another example is the $50$-approximation of Bonamy et al.~\cite{BonamyEtAl2025ExcludedMinors} for $K_{2,t}$-minor-free graphs, whose ratio is independent of $t$, even though in this class, the maximum average degree is $t+1$ for $t\geq2$~\cite{ChudnovskyEtAl2011}.
Thus, additional structure can sometimes give better approximation ratios.
However, for bounded-expansion classes in general, our bound is tight up to $\eps$ for every even integer $d\geq2$.

\begin{figure}[H]\providecommand{\rewritewording}[1]{#1}\providecommand{\unchangedwording}[1]{#1}
    \centering
    \begin{tikzpicture}[
        classbox/.style={rounded corners=2pt,draw=primary!70,line width=.8pt,text=primary,align=center,font=\small,text width=3.15cm,minimum height=1.03cm,inner sep=5pt},
        newclass/.style={classbox,draw=mydarkgreen,fill=mygreen!18,text=mydarkgreen!70!black,line width=1.2pt},
        every node/.style={outer sep=2pt}
    ]
        \node[newclass,text width=6.6cm] (main) at (0,4.15)
            {\textbf{Bounds on }$\bm\mad$\textbf{ and }$\bm\nabla_1$\\
             $\mad(\cC)+1+\eps$};
        \node[newclass,text width=4.4cm] (expansion) at (0,2.55)
            {Bounded expansion\\$\mad(\cC)+1+\eps$};

        \node[classbox,text width=3.8cm] (minor) at (-4.2,.95)
            {$K_t$-minor-free\\$\mad(\cC)+1+\eps$~\cite{AmiriSiebertz2026}};
        \node[newclass] (queue) at (0,.95)
            {Queue-number $\leq k$\\$4k+1+\eps$};
        \node[newclass] (stack) at (4.2,.95)
            {Stack-number $\leq k$\\$2k+3+\eps$};

        \node[classbox] (genus) at (-4.2,-.95)
            {Euler genus $\leq g$\\$7+\eps$~\cite{AmiriSiebertz2026}};
        \node[classbox] (treewidth) at (0,-.95)
            {Treewidth $\leq k$\\$2k+1+\eps$~\cite{AmiriSiebertz2026}};
        \node[classbox] (planar) at (-4.2,-2.85)
            {Planar\\$7+\eps$~\cite{AmiriSiebertz2026}};
        \node[classbox] (pathwidth) at (0,-2.85)
            {Pathwidth $\leq k$\\$2k+1+\eps$~\cite{AmiriSiebertz2026}};
        \node[classbox] (trianglefree) at (-4.2,-4.6)
            {Triangle-free planar\\$5+\eps$~\cite{AmiriSiebertz2026}};

        \node[classbox] (outerplanar) at (0,-4.6)
            {Outerplanar\\$5$~\cite{BonamyEtAl2021Outerplanar}};

        \draw[edge,-{Stealth[length=2.1mm,width=1.6mm]},draw=primary!85] (expansion.north) -- (main.south);
        \draw[edge,-{Stealth[length=2.1mm,width=1.6mm]},draw=primary!85] (minor.north) -- (expansion.south west);
        \draw[edge,-{Stealth[length=2.1mm,width=1.6mm]},draw=primary!85] (queue.north) -- (expansion.south);
        \draw[edge,-{Stealth[length=2.1mm,width=1.6mm]},draw=primary!85] (stack.north) -- (expansion.south east);
        \draw[edge,-{Stealth[length=2.1mm,width=1.6mm]},draw=primary!85] (genus.north) -- (minor.south);
        \draw[edge,-{Stealth[length=2.1mm,width=1.6mm]},draw=primary!85] (treewidth.north west) -- (minor.south east);
        \draw[edge,-{Stealth[length=2.1mm,width=1.6mm]},draw=primary!85] (planar.north) -- (genus.south);
        \draw[edge,-{Stealth[length=2.1mm,width=1.6mm]},draw=primary!85] (pathwidth.north) -- (treewidth.south);
        \draw[edge,-{Stealth[length=2.1mm,width=1.6mm]},draw=primary!85] (trianglefree.north) -- (planar.south);
        \draw[edge,-{Stealth[length=2.1mm,width=1.6mm]},draw=primary!85] (outerplanar.north west) -- (planar.south east);
    \end{tikzpicture}
    \caption{
        Some graph classes covered by our results and their respective approximation ratios.
        An arrow from class $\cC$ to class $\cD$ means $\cD$ contains $\cC$.
        Our new results are shown in green.
        Note that for $3\leq t\leq9$, we can get a ratio of $2t-3+\eps$ for $K_t$-minor-free graphs.
    }
    \label{fig:class-containments}
\end{figure}

\subsection{Related works}

Constant-round approximation algorithms of \MinDS in the distributed setting have been widely studied.
In this section, we give an overview of a selection of results in this field of study.
We focus on constant-round algorithms in the \LOCAL model whose approximation ratio is also independent of the maximum degree.

On general graphs, Kuhn, Moscibroda, and Wattenhofer~\cite{KuhnEtAl2016} proved that a constant-factor approximation for \MinDS requires $\Omega\bigl(\min\{\log\Delta(G)/\log\log\Delta(G),\allowbreak\sqrt{\log n/\log\log n}\}\bigr)$ \LOCAL rounds.
Thus, we need to restrict the input graph class to get deterministic constant-round algorithms with a constant approximation factor.
This is the main setting considered for the rest of this subsection.

\paragraph{The history of approximations for planar graphs.}
Our story begins with planar graphs, for which Lenzen, Pignolet, and Wattenhofer~\cite{LenzenEtAl2013} gave a $130$-approximation algorithm.
This ratio was later improved to $52$ by Wawrzyniak~\cite{Wawrzyniak2014} by giving a stronger analysis of the same algorithm.
Then, by designing a different algorithm, Heydt, Siebertz, and Vigny~\cite{HeydtEtAl2022,HeydtEtAl2025} were able to achieve $11+\eps$ for positive $\eps$.
Very recently, Wawrzyniak~\cite{Wawrzyniak2026} obtained $8+\eps$, which was improved shortly afterwards by the $7+\eps$ bound of Amiri and Siebertz~\cite{AmiriSiebertz2026}.

\paragraph{Subclasses of planar graphs.}
Subclasses of planar graphs have also been studied. For instance, Wawrzyniak~\cite{Wawrzyniak2025} gave a constant-round $6$-approximation for triangle-free planar graphs. Very recently, Amiri and Siebertz~\cite{AmiriSiebertz2026} achieved the tight-up-to-$\eps$ factor $5+\eps$ as predicted by Bonamy et al.~\cite[Section~4]{BonamyEtAl2021Outerplanar}.
For outerplanar graphs, Bonamy et al.~\cite{BonamyEtAl2021Outerplanar} obtained a tight factor $5$.
In terms of lower bounds, Hilke, Lenzen, and Suomela~\cite{HilkeEtAl2014} proved that for every $\eps>0$, no deterministic constant-round $(7-\eps)$-approximations for planar graphs exist in a constant number of rounds.
Recently, Amiri and Siebertz~\cite[Theorem~8.7]{AmiriSiebertz2026} subsequently proved that no constant-round algorithm can beat the ratio $2r+1$ on the $r$-th powers of paths.
The same $2r+1$ lower bound had already been proved before in the author's thesis~\cite[Lemma~V.2.1]{Picavet2026Thesis} for $r$-th powers of paths with additional leaves attached to one endpoint. 

\paragraph{Sparse graph classes.}
Various generalisations to sparse graph classes have also been studied.
Czygrinow et al.~\cite{CzygrinowEtAl2018} gave a constant-round constant-approximation algorithm for classes excluding a topological minor.
Kublenz, Siebertz, and Vigny~\cite{KublenzEtAl2021} then extended this to bounded-expansion classes.
This result was subsequently improved by Heydt et al.~\cite{HeydtEtAl2025} to the larger class of graphs of bounded $\nabla_1(G)$.
We also work in the same bounded $\nabla_1(G)$ setting. \Cref{cor:mad} gives an approximation ratio depending only on the maximum average degree of the input class.
Note that bounded arboricity (i.e. bounded $\nabla_0(G)$) alone is not sufficient for a constant-factor approximation in constant rounds.
Dory, Ghaffari, and Ilchi~\cite{DoryEtAl2024} proved that the lower bound of Kuhn, Moscibroda, and Wattenhofer also holds on graphs of arboricity two.
Thus, constant-factor approximations require $\Omega\bigl(\min\{\log\Delta(G)/\log\log\Delta(G),\allowbreak\sqrt{\log n/\log\log n}\}\bigr)$ deterministic rounds even on these graphs.
For non-constant rounds, $O(\log^* n)$ rounds are sufficient for $(1+\eps)$-approximations on planar graphs~\cite{CzygrinowEtAl2008} and, more generally, for graphs excluding a fixed minor~\cite{CzygrinowEtAl2018}.

\paragraph{Truly constant approximations.}
Recently, some work has been done on algorithms whose approximation ratio does not depend on the parameters of the class.
Bonamy, Gavoille, Picavet, and Wesolek~\cite{BonamyEtAl2025ExcludedMinors} proved a $50$-approximation on $K_{2,t}$-minor-free graphs running in $O_t(1)$ rounds, breaking the linear-in-$t$ barrier for the approximation ratio.
Bonamy, Gavoille, Das, Suomela, Picavet, and Wesolek~\cite{BonamyEtAl2026} then obtained a $(34+\eps)$-approximation on graphs of Euler genus at most $g$ in $O_{g,\eps}(1)$ rounds (without the need to be given the embedding).
Here too, the approximation ratio is independent of the parameters of the class, while the round complexity depends on them.
The authors proved these results by developing a series of meta-theorems for a particular class of minimization problems on special graph classes.
For \MinDS, these automatically turn any constant-round planar $\alpha$-approximation into a $(3\alpha+1)$-approximation on graphs of bounded Euler genus.
The round complexity depends only on the genus and the round complexity of the planar algorithm.
Their meta-theorems also apply to other problems, such as \textsc{Minimum $k$-tuple Dominating Set} or \textsc{Distance-$k$ Minimum Dominating Set}.

More recently, Wawrzyniak~\cite{Wawrzyniak2026Transfer} refined these meta-theorems, achieving arbitrarily small loss $\alpha+\eps$, at the cost of asking for assumptions at a larger scale that depends on $\eps$.
For \MinDS, this new meta-theorem takes a constant-round planar $\alpha$-approximation as input and turns it into an $(\alpha+\eps)$-approximation on graphs of bounded Euler genus.
Again, the round complexity only depends on the genus, $\eps$, and the round complexity of the planar algorithm~\cite[Sections~7.1--7.3]{Wawrzyniak2026Transfer}.
With the help of this meta-theorem and the planar and triangle-free planar algorithms of Amiri and Siebertz, one can get $(7+\eps)$- and $(5+\eps)$-approximations on bounded-genus graphs and triangle-free bounded-genus graphs, respectively.

\paragraph{Other models.}
In anonymous planar networks with port numbering, where nodes do not have identifiers, Wawrzyniak~\cite{Wawrzyniak2015} gave a deterministic constant-factor approximation in constant rounds, that uses short messages.
Some other algorithms were also studied in the context of the \CONGEST model, which is very similar to \LOCAL, only that each edge can only transfer $O(\log n)$ bits per round.
Czygrinow et al.~\cite{CzygrinowEtAl2019} designed a constant-factor constant-round approximation algorithm for graphs of bounded orientable genus in \CONGEST\textsubscript{\textsf{BC}}, a broadcast variant of \CONGEST where each vertex has to send the same message to all its neighbors.
Later, Dory, Ghaffari, and Ilchi~\cite{DoryEtAl2024} studied graphs of arboricity at most $\alpha$, and gave a deterministic $(2\alpha+1)(1+\eps)$-approximation for weighted \MinDS in $O(\eps^{-1}\log\Delta(G))$ \CONGEST rounds.
More recently, Alipour and Farokhnejad~\cite{AlipourFarokhnejad2026} obtained a deterministic $O(\alpha\log\Delta(G)/\log\log\Delta(G))$-approximation in $O(\log\Delta(G)/\log\log\Delta(G))$ rounds. The algorithm needs to send only one bit per edge per round, but requires knowledge of $\Delta(G)$.

\subsection{Comparison with recent algorithms}\label{sec:comparison}

In this section, we compare the very recent constant-round approximation algorithms for \MinDS in the \LOCAL model with our algorithm.
Our work was initiated and results were obtained independently before the release of the papers by Amiri and Siebertz~\cite{AmiriSiebertz2026} and Wawrzyniak~\cite{Wawrzyniak2026}.
However, our algorithm draws lots of its inspiration from the $(11+\eps)$-approximation algorithm of Heydt et al.~\cite{HeydtEtAl2025} and preprocesses the graph before solving an LP and then computing the final dominating set via rounding.
The key difference in our work is the idea of introducing hypergraph constraints in the LP to represent choices for ``hard to dominate'' regions, and then of bounding this hypergraph's degree using the sunflower lemma.
We also have to argue that solving this new LP instead of the standard domination LP does not increase the solution size too much.
Our strategy differs from that of Amiri and Siebertz, who use a clustering and averaging approach, and from that of Wawrzyniak, who improves the analysis and slightly changes the algorithm of Heydt et al.
It also applies to more general graph classes: \cref{cor:mad} only requires bounded maximum average degree and depth-one shallow-minor density.
Thus, our algorithm works on proper minor-closed classes, but also on some classes not contained in any proper minor-closed class, as illustrated in \cref{cor:other-classes}.

In the rest of the subsection, we sometimes refer to $\MDS(G)$, the minimum cardinality of a dominating set, as the integral optimum.
In contrast, the fractional optimum $\MDS_f(G)$ is the minimum of $\sum_{v\in V(G)}x_v$ over nonnegative vertex weights $x_v$ satisfying $\sum_{v\in N_G[w]}x_v\geq1$ for every vertex $w$.
We say that an algorithm has a guarantee against the integral optimum if the returned dominating set has size at most $\alpha\MDS(G)$ for some $\alpha$, and we say it has a guarantee against the fractional optimum if the returned dominating set has size at most $\alpha\MDS_f(G)$ for some $\alpha$.\footnote{Note that, in both cases, the algorithm outputs an actual dominating set.}
In the latter, we compare to the optimal size of a solution to the fractional LP. Thus $\MDS_f(G)\leq\MDS(G)$, and the guarantee against the fractional optimum is a stronger conclusion than the guarantee against the integral optimum, if the ratio $\alpha$ is the same in both cases.
\Cref{tab:planar-comparison} summarizes the graph classes covered by the algorithms and their guarantees for $\eps>0$.

\begin{table}[htbp]
    \centering
    \small
    \renewcommand{\arraystretch}{1.15}
    \setlength{\tabcolsep}{4pt}
    \begin{tabular}{@{}l|l|c|c@{}}
    \hline
    Result & Graph class $\cC$ & MDS of size at most & Number of rounds \\
    \hline
    Wawrzyniak~\cite{Wawrzyniak2026} & Planar & $(8+\eps)\MDS(G)$ & $O_\eps(1)$ \\
    & Euler genus $\leq g$ & $(25+\eps)\MDS(G)$ & $O_{g,\eps}(1)$ \\
    Amiri--Siebertz~\cite{AmiriSiebertz2026} & Excluding a fixed minor & $(2a(\cC)+1+\eps)\MDS_f(G)$ & $O_{\cC}(\eps^{-\kappa_a})$ \\
    \Cref{cor:mad} & Bounded $\mad$ and $\nabla_1$ & $(\mad(\cC)+1+\eps)\MDS(G)$ & $O_{\cC}(\eps^{-4}\log(1/\eps)^2)$ \\
    \hline
    \end{tabular}
    \caption{Bounds on the size of the returned dominating set for different graph classes and different algorithms. Here $\mad(\cC)=\sup_{G\in\cC}\mad(G)$ and $a(\cC)$ is the supremum of the edge-to-vertex ratio over all graphs of the class. Note that for classes closed under taking subgraphs, $\mad(\cC)=2a(\cC)$. The constant $\kappa_a$ is equal to $\ln 3/\ln(8a/(8a-1))$. For planar graphs, we have $a=3$, so $\kappa_3<25.9$.}
    \label{tab:planar-comparison}
\end{table}

\paragraph{Amiri and Siebertz.}
For a fixed proper minor-closed class $\cC$, let
\[
    a(\cC)=\sup\left\{\frac{|E(H)|}{|V(H)|}~\middle|~ H\in\cC \text{ non-empty}\right\}.
\]
Amiri and Siebertz~\cite[Theorem~1.1]{AmiriSiebertz2026} give a $(2a(\cC)+1+\eps)$-approximation against the fractional optimum, in constant rounds.
With this algorithm, they are able to derive a $(5+\eps)$-approximation for triangle-free planar graphs against the fractional optimum, and a $(7+\eps)$-approximation for bounded Euler-genus graphs against the integral optimum~\cite[Corollaries~1.4 and~8.2]{AmiriSiebertz2026}.
Both of those ratios are tight up to $\eps$.
Our coefficient agrees with theirs when $d=2a(\cC)$. However, we instead compare to the size of the integral solution and not the fractional solution. Note that for any minor-closed class $\cC$ (and more generally, any monotone class), it is true that we can choose $d=2a(\cC)$.
Thus, while we have a weaker guarantee, our algorithms cover graph classes not covered by the algorithm of Amiri and Siebertz.
Indeed, our algorithms apply in classes not included in any proper minor-closed class, such as bounded queue-number graphs or bounded stack-number graphs. See \cref{cor:other-classes} for the full results.

Moreover, our algorithms have a better running time dependency in $\eps$, and the local computation of our algorithm is polynomial in the size of the view of every vertex, for fixed parameters.
This addresses the efficiency questions in~\cite[Section~9]{AmiriSiebertz2026}, in the integral-optimum case.

\paragraph{Wawrzyniak.}
Wawrzyniak~\cite{Wawrzyniak2026} obtains $8+\eps$ by strengthening the structural analysis of Heydt et al. The author also derives a $(25+\eps)$-approximation for bounded Euler genus~\cite[Corollary~7.1]{Wawrzyniak2026}. We improve this ratio to $7+\eps$ in \cref{cor:genus}, matching the integral bounds of Amiri and Siebertz.

\paragraph{Heydt et al.}
As said before, our algorithm preprocesses the graph before solving and rounding an LP, like in the algorithm of Heydt et al.~\cite{HeydtEtAl2025}.
We adapt some of their results, such as their reduction to a smaller graph, their selection of high-degree vertices, and their LP rounding, to our case.
The difference is that we encode costly domination choices as additional constraints in the LP, and then use sunflowers to make sure this LP can be solved quickly.
Both our algorithms and theirs apply to the same graph classes, i.e. when $\nabla_1$ is bounded. 
However, we improve the approximation ratio so that ours depends only on the maximum average degree of the input class.

\Needspace{10\baselineskip}
\subsection{High-level overview}\label{sec:overview}

Very roughly, our algorithm is divided into four stages: first, we build a hypergraph, whose edges have a common neighborhood ``hard to dominate'' without taking a vertex of the hyperedge itself.
Second, we take in our solution a very small set of vertices that dominates many vertices, and forget about already-dominated vertices.
Third, we compute a fractional solution to the slightly stronger problem of finding a fractional hitting set of this hypergraph, while also fractionally dominating the other vertices not fractionally dominated directly by the fractional hitting set.
Then, we finally round this fractional solution to get an actual dominating set of the graph.
We give more detail in the rest of this subsection.

\paragraph{Rounding.}
Suppose we are in step four above: we have a fractional solution to an LP that dominates the vertices we still need to dominate and hits every hyperedge we still need to hit. First suppose that $\mad(G)\leq d$.
In \cref{sec:rounding}, we prove that we can round this to an integral solution and get a dominating set of size at most $d+1$ times the original weight of the fractional solution.\footnote{Note that the factor $d+1$ appears only because of the rounding procedure. All other algorithms have arbitrarily small loss.}
To do this, we first select all vertices that have weight at least $1/(d+1)$. This forms some set of vertices. We also add to the returned set all vertices undominated by this former set.
This rounding procedure still holds under the weaker assumption that $|E(H)|\leq(d/2)|V(H)|+b$ for every subgraph $H$, at the cost of an additional $4b/d$ in the size of the returned set (see \cref{lem:rounding}).
This is the rounding procedure used in Heydt et al.~\cite[Section~6.1]{HeydtEtAl2025}, adapted here to additionally satisfy the hypergraph hitting-set constraints and to work under the weaker assumption described above.

\paragraph{Preparing a solution that can be rounded.}
We give a short summary of the algorithm of \cref{thm:main} and its proof. See \cref{alg:main} for the pseudocode.
Fix an accuracy parameter $0<\rho\leq1$, that will be chosen sufficiently small with respect to $d$ and $\eps$.
All constants we give below may depend on $d$, $\beta$, $b$, and $\rho$. They do not depend on the size of the graph.

\begin{enumerate}
    \item For any set $A$, define $X(A)=\bigcap_{v\in A}N_G(v)$.
    Construct a hypergraph $\cH_0$ whose hyperedges are all small enough, nonempty sets $A$ for which $X(A)$ cannot be dominated with a small number of vertices outside of $A$. This gives a hypergraph of bounded rank.
    We sparsify it using the sunflower lemma to get a hypergraph $\cH$ of bounded maximum degree.
    Throughout this sparsification, we also keep a minimum dominating set that can be enlarged to a set $C$ that is at most a $\rho$-fraction bigger and hits every hyperedge of $\cH$.
    Note that $C$ is not computed in the algorithm. It is only used in the analysis to prove upper bounds on the size of the returned set (see \cref{sec:auxiliary,sec:hypergraph}).

    \item A hitting set of $\cH$ already dominates the vertices in $X=\bigcup_{e\in E(\cH)}X(e)$.
    The remaining set $R_0=V(G)\setminus X$ has size $O(\MDS(G))$, but is still too large for us to simply include all of it into our output.
    However, we prove a very important property: every vertex has only a bounded number of neighbors in $R_0$ (see \cref{sec:residual}).

    \item Thus, we should only select \textit{some} vertices of $R_0$. 
    We locally compute the subgraph $G'$ induced by the vertices of $V(\cH)\cup R_0$ and, among the other vertices, one representative $v$ for each different nonempty $N_G(v)\cap R_0$. This subgraph has size $O(\MDS(G))$.
    We add to our output the set $P$ of vertices of sufficiently high degree in $G'$.
    This allows us to bound the maximum degree of $F=G'-P$. Moreover, by construction of $G'$, taking this set $P$ only costs $\rho\MDS(G)$.
    Moreover, we can construct a set $C'\subseteq V(G')$ from $C$ that still dominates $R_0$ and is a hitting set of $\cH$ in $G'$ (see \cref{sec:lp-preparation}).

    \item Solve an LP on $F$ that fractionally hits hyperedges not already hit by $P$ and fractionally dominates vertices of $R_0$ not already dominated by $P$ (see \cref{sec:locality}).
    Because of the bounded rank and degree of the hypergraph, and the bounded degree of $F$, we can prove that the row and column sums of the LP are bounded.
    So, there exists a constant round \LOCAL algorithm that gives a fractional solution a $\rho$-fraction bigger than its optimum.
    Since $C'\setminus P$ is a solution to the LP, its fractional solution's weight can be bounded by $(1+\rho)^2\MDS(G)$.
    Round this fractional solution and add $P$ to get our returned dominating set.

    \item This already gives a $(\mad(\cC)+1+\eps)$-approximation on the class $\cC$. However, if the graph does not satisfy $\mad(G)\leq d$ but instead we have $|E(H)|\leq\frac d2|V(H)|+b$ for every subgraph $H$ of $G$, then remove the additive error term $b$ by solving graphs with small domination number exactly. 
    This is the reduction idea of Bonamy et al.~\cite[Proposition~D.2]{BonamyEtAl2026}.
    Such graphs have bounded diameter and can be solved in constant rounds.
    On all other graphs, the additive $b$ term is very small compared to $\MDS(G)$ (see \cref{sec:additive,sec:assembly}).
\end{enumerate}

The third step above, where we work in $G'$ and select its high-degree vertices, is adapted from Heydt et al.~\cite[Sections~6.3--6.4]{HeydtEtAl2025}. Here, we additionally preserve the hyperedge constraints.
The new idea is to combine all the hitting set constraints, sparsified in the first step, with the domination constraints into a single LP. 
The sparsification step allows us to bound the row/column sums of the LP. This guarantee is needed to apply the LP algorithm of Kuhn, Moscibroda, and Wattenhofer~{\cite[Corollary~4.1]{KuhnEtAl2006}}.

\begin{algorithm}[H]
    \caption{Dominating set approximation}\label{alg:main}
    \KwIn{A graph $G$ satisfying the density assumptions of \cref{thm:main}, and $\eps>0$.}
    \KwOut{A dominating set of size at most $(d+1+\eps)\MDS(G)$.}
    Set $\rho\gets\min\{1,\eps/(3(d+1)+2)\}$ and use $r,q,t,\Gamma$ as in \cref{eq:structural-parameters,eq:cutoff}\;
    Set $h\gets\max\{1,\lceil4b/(d\rho)\rceil\}$\;
    \If{the test of \cref{lem:additive-removal} finds $\MDS(G)\leq h$}{
        \Return{a canonical minimum dominating set of the graph}\;
    }
    Build $\cH_0$ from the ``hard to dominate'' common neighborhoods and sparsify it to $\cH$ as in \cref{claim:sunflower}\;
    Set $X\gets\bigcup_{e\in E(\cH)}X(e)$ and $R_0\gets V(G)\setminus X$\;
    Compute $G'$ by keeping $V(\cH)\cup R_0$ and representatives of the other nonempty $R_0$-neighborhoods\;
    Set $P\gets\{v\in V(G')\mid d_{G'}(v)>\Gamma\}$ and $F\gets G'-P$\;
    Set $R_1\gets R_0\setminus N_G[P]$ and $\cH'\gets\{e\in E(\cH)\mid e\cap P=\varnothing\}$\;
    Compute the fractional solution $x$ of \cref{lem:local-lp}\;
    Run \cref{alg:rounding} on $(F,R_1,x,d)$, obtaining $A\cup U$\;
    \Return{$P\cup A\cup U$}\;
\end{algorithm}

\section{Preliminaries}\label{sec:preliminaries}

\paragraph{Graphs.}
All graphs are finite, simple, and undirected, and the algorithms are assumed to run on connected graphs.
We write $V(G)$ and $E(G)$ for the vertex set and edge set of $G$, respectively, and use $n=|V(G)|$ when no confusion is possible.
The \emph{open neighborhood} $N_G(v)$ of a vertex $v$ is its set of neighbors, and its \emph{closed neighborhood} is $N_G[v]=N_G(v)\cup\{v\}$.
The open neighborhood of a set of vertices $U$ is $N_G(U)=\left(\bigcup_{u\in U}N_G[u]\right)\setminus U$, and its closed neighborhood is $N_G[U]=U\cup N_G(U)$.
A \emph{hypergraph} $\cH$ consists of a vertex set $V(\cH)$ and a family $E(\cH)$ of subsets of $V(\cH)$, called \emph{hyperedges}. Its \emph{rank} is the maximum size of a hyperedge, or zero if there are no hyperedges, and the \emph{degree} of a vertex is the number of hyperedges containing it.
A \emph{hitting set} of $\cH$ is a set of vertices intersecting every hyperedge. 
We say that a subset of vertices \emph{hits} $\cH$ if it is a hitting set of $\cH$.

\paragraph{Graph classes and parameters.}

A graph is \emph{planar} if it can be drawn in the plane without crossing its edges.
A surface has Euler genus $2g$ if it is orientable of genus $g$, and $g$ if it is nonorientable of genus $g$. 
The \emph{Euler genus} of a graph is the minimum $g$ such that it can be embedded into a closed connected surface of Euler genus $g$ without crossing its edges.
In particular, planar graphs have Euler genus zero.
A \emph{minor} of $G$ is a simple graph obtained from $G$ by deleting some of its vertices or edges, and by contracting some edges.
A graph is said to be \emph{$H$-minor-free} if it does not contain $H$ as a minor, and a class \emph{excludes a fixed minor} if there is a graph $H$ such that every graph of the class is $H$-minor-free.

\paragraph{Sparse graphs.}

We give a few different parameters measuring the sparsity of a graph class.
The \emph{density} of a nonempty graph $G$ is defined as $|E(G)|/|V(G)|$, i.e. its edge-to-vertex ratio.
Its \emph{maximum average degree} $\mad(G)$ is the maximum of $2|E(H)|/|V(H)|$ over all nonempty subgraphs $H\subseteq G$, i.e. twice the density of its densest subgraph.
The \emph{arboricity} $\arb(G)$ is the minimum number of forests whose edge sets partition $E(G)$, and the \emph{degeneracy} $d(G)$ of $G$ is the largest minimum degree of a nonempty subgraph of $G$.
$\mad, \arb$ and $d$ are all linearly equivalent.
Indeed, we have the inequality $\mad(G)/2\leq \arb(G)\leq d(G)\leq\mad(G)$. Thus, bounding one of these parameters bounds all the others.
For an integer $r$, a \emph{depth-$r$ shallow minor} of $G$ is a simple graph obtained from a subgraph of $G$ by contracting pairwise vertex-disjoint connected subgraphs of radius at most $r$, with radius defined inside each contracted subgraph. For $r=1$, each contracted subgraph contains a spanning star.
$\nabla_r(G)$ is defined as the maximum density $|E(H)|/|V(H)|$ of a nonempty depth-$r$ shallow minor $H$ of $G$.
Thus, we have $\nabla_0(G)=\mad(G)/2$. $\nabla_1(G)$ is the \emph{depth-one shallow-minor density} used in our theorem.
A graph class $\cC$ has \emph{bounded expansion} if there is a function $f$ such that $\nabla_r(G)\leq f(r)$ for every nonnegative integer $r$ and graph $G\in\cC$~\cite{NesetrilEtAl2012}.
In particular, every bounded-expansion class has bounded $\nabla_1$. Therefore, our theorem also applies to every bounded-expansion class.

\paragraph{The \LOCAL model.}
In the deterministic \emph{\LOCAL model}, each vertex is a processor with a unique identifier taken from the set $[1,n^c]$ where $c$ is some constant.
Communication works in synchronous rounds. During each round, vertices can perform arbitrary local computation and then send messages of arbitrary size to their neighbors.
Therefore, for an algorithm running in $T$ rounds, the output of each vertex can only depend on the graph induced by its $T$-radius neighborhood and the associated identifiers.
The \emph{round complexity} is the number of communication rounds needed for every vertex to output.

\section{The algorithm and its analysis}\label{sec:algorithm}

The following theorem allows an additional additive bound in the edge-density. \Cref{cor:mad} then follows by setting $b=0$.
The parameter $d$ determines the approximation ratio, while $\beta$ and $b$ are only here to bound the running time.

\begin{theorem}\label{thm:main}
    For fixed $d,\beta\geq1$, $b\geq0$, and $\eps>0$, there is a deterministic constant-round \LOCAL algorithm that computes a $(d+1+\eps)$-approximation of \MinDS on graphs $G$ with $\nabla_1(G)\leq\beta$ and
    \begin{align*}
        |E(H)|\leq\frac d2|V(H)|+b\quad\forall H\subseteq G.
    \end{align*}
    For $\eps>0$, the round complexity is $O_{d,\beta,b}(\eps^{-4}\log(1/\eps)^2)$, and the local computation is polynomial for fixed $d,\beta,b,\eps$.
\end{theorem}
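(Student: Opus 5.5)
The plan follows the five stages of \cref{alg:main}, and the accounting is done against a fixed minimum dominating set $D$ of $G$. Set $\rho=\min\{1,\eps/(3(d+1)+2)\}$.

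\textbf{Small domination number.} First I dispose of the case $\MDS(G)\le h$, where $h=\max\{1,\lceil 4b/(d\rho)\rceil\}$.
\begin{itemize}
\item A connected graph with domination number at most $h$ has diameter at most $3h-1$.
\item So in $O(h)$ rounds every vertex can check whether its ball already contains the whole graph with a dominating set of size at most $h$.
\item If so, all vertices output a canonical minimum dominating set (the reduction of Bonamy et al.).
\item Otherwise $\MDS(G)>h$, which gives $4b/d\le\rho\,\MDS(G)$. Hence the additive loss from the rounding lemma will be negligible.
\end{itemize}

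\textbf{Hypergraph construction and sparsification.} I fix thresholds $r,q,t$. A hyperedge is a nonempty set $A$ with $|A|\le r$ such that $X(A)$ cannot be dominated by $q$ vertices outside $A$.
\begin{itemize}
\item The key claim is that $D$ can be enlarged to a hitting set $C$ with $|C|\le(1+\rho)\MDS(G)$. Charge each unhit hyperedge to $D$: since $\nabla_1(G)\le\beta$, few vertices of $D$ share many common neighbours, so large common neighbourhoods are rare.
\item The hypergraph $\cH_0$ has bounded rank but possibly unbounded degree. To fix this, I repeatedly find a sunflower with $t$ petals containing a high-degree vertex and replace it by its core. This must be done in a canonical, locally computable way.
\item If the core is empty, the $t$ disjoint petals would force $\MDS(G)$ to be large relative to $t$. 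I would argue instead that an empty core is impossible: disjoint petals with hard common neighbourhoods contradict $\nabla_1\le\beta$.
\item I must also check that hitting the core is still affordable, i.e. that $C$ can be adjusted by a $\rho$-fraction.
\end{itemize}

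\textbf{Residual set, reduced graph, and LP.} Let $R_0=V(G)\setminus X$.
\begin{itemize}
\item Every vertex has boundedly many neighbours in $R_0$. A vertex with many such neighbours would, together with bounded $\nabla_1$, create a hard common neighbourhood, which is therefore covered by $X$.
\item Build $G'$ as described. By $\nabla_1$-density, the number of distinct traces $N_G(v)\cap R_0$ is $O(|R_0|)=O(\MDS(G))$.
\item Take $P$ to be the set of vertices of degree greater than $\Gamma$ in $G'$. Choosing $\Gamma$ large gives $|P|\le\rho\,\MDS(G)$, since $G'$ is sparse with $O(\MDS(G))$ vertices.
\item Solve the combined LP on $F=G'-P$ with the Kuhn--Moscibroda--Wattenhofer algorithm. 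Its row and column sums are bounded in terms of $\Gamma$, the hypergraph degree and the rank, so it runs in $O_\rho(1)$ rounds and is within $1+\rho$ of optimum.
\item The set $C'\setminus P$ is feasible, so the LP value is at most $(1+\rho)^2\MDS(G)$.
\end{itemize}

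\textbf{Rounding and final bound.} Applying \cref{lem:rounding} costs a factor $d+1$ plus an additive $4b/d\le\rho\,\MDS(G)$. Summing the pieces gives
\[
|P\cup A\cup U|\le\bigl(\rho+(d+1)(1+\rho)^2+\rho\bigr)\MDS(G),
\]
which is at most $(d+1+\eps)\MDS(G)$ for the chosen $\rho$.

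\textbf{Round complexity.} The round count is dominated by the KMW LP, which needs roughly $\rho^{-4}\log^2(1/\rho)$ rounds at our parameters. This yields the stated $O(\eps^{-4}\log(1/\eps)^2)$.

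\textbf{Main obstacle.} I expect the hardest part to be the sunflower sparsification. It has to simultaneously keep the cost of $C$ within $(1+\rho)\MDS(G)$, keep the computation local and canonical, and preserve the property that $R_0$ has bounded degree into every vertex.
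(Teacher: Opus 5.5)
Your outline follows the paper's architecture. The gap is at the step you flag as the main obstacle: the idea that makes it work is missing, and the mechanism you sketch points the wrong way. The cheap hitting set does not come from charging unhit hyperedges to $D$ because ``few vertices of $D$ share many common neighbours''. What is rare are vertices \emph{outside} $D$ whose neighbourhoods are spread over many vertices of $D$.

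Let $T$ be the set of $v\notin D$ such that $N_G(v)$ cannot be dominated by at most $q$ vertices of $D$. Build a depth-one shallow minor by keeping $T$ as singletons and contracting a star around each vertex of $D$, each remaining vertex being assigned to a dominator. In this minor every $v\in T$ has degree at least $q+1$. Hence $(q+1)|T|\le 2\beta(|D|+|T|)$, which gives $|T|\le\rho\MDS(G)$ for $q\ge 2\beta+2\beta/\rho$.

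Every $v\notin D\cup T$ then has a witness $D_v\subseteq D$ with $|D_v|\le q$ that dominates $N_G(v)\supseteq X(A)$ for every $A\ni v$. A hyperedge $A$ cannot be dominated by $q$ vertices outside $A$, so $D_v$ must meet $A$. This is the invariant \cref{eq:invariant}. In particular, a hyperedge missed by $D$ lies entirely inside $T$, so $C=D\cup T$ hits $\cH_0$ with no per-hyperedge charging. Everything downstream depends on this: the LP feasibility of $C'\setminus P$, and the bound $|V(\cH)|\le r\,\Delta(\cH)\,|C|$ that you need for $|V(G')|=O(\MDS(G))$ and hence $|P|\le\rho\MDS(G)$.

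The same invariant makes the sparsification free, and it fixes the sunflower size. You must use sunflowers of $q+1$ hyperedges, where $q$ is the bound on $|D_v|$, not an unrelated threshold $t$. Suppose $Z$ is the kernel and $v\in Z\setminus(D\cup T)$. Then $D_v$ meets all $q+1$ sets but has at most $q$ elements, so it cannot meet every disjoint petal and must meet $Z$. By induction, kernels inherit the invariant, and $C$ hits every hyperedge at every stage without any adjustment.

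Without this, your plan to ``adjust $C$ by a $\rho$-fraction'' has nothing behind it. An arbitrary small hitting set can meet every petal while missing the core, and there may be far more than $\rho\MDS(G)$ such sunflowers.

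Two smaller points:
\begin{itemize}
\item Your claim that an empty core is impossible is false. For example, the two degree-$m$ vertices of disjoint copies of $K_{2,m}$ with $m>q$ give pairwise disjoint hyperedges. The claim is also unnecessary: the degree bound only uses sunflowers among hyperedges through a fixed vertex, whose kernel contains that vertex, so empty-kernel sunflowers are simply ignored.
\item In the residual-degree step, the exclusion of $K_{r+1,t}$ must come from the density hypothesis with parameter $d$, so that $r=\lceil d/2\rceil\le d+1$. If you derive it from $\nabla_1\le\beta$, the rank depends on $\beta$ and can exceed $d+1$ (e.g.\ for bounded genus). That breaks the rank hypothesis of \cref{lem:rounding}.
\end{itemize}
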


\subsection{The rounding procedure}\label{sec:rounding}

In this subsection, we adapt the rounding procedure used by Heydt et al.~\cite[Section~6.1]{HeydtEtAl2025} to our setting.
Given a subset of vertices to dominate $R$, \cref{alg:rounding} selects vertices of sufficiently large weight, then adds any vertices of $R$ that are still undominated.

\bigskip

\begin{algorithm}[H]
    \caption{Rounding procedure}\label{alg:rounding}
    \KwIn{A graph $G$, a set $R\subseteq V(G)$, non-negative vertex weights $x$, and $d>0$.}
    \KwOut{A set $A\cup U$ of selected vertices.}
    $A\gets\{v\in V(G)\mid x_v\geq1/(d+1)\}$\;
    Each vertex informs all its neighbors whether it belongs to $A$\;
    $U\gets R\setminus N_G[A]$\;
    \Return{$A\cup U$}\;
\end{algorithm}

\bigskip

The following lemma bounds the size of the returned set using the graph's density, and additionally shows that we can satisfy hyperedge constraints at the same time.

\begin{lemma}\label{lem:rounding}
    Let $d>0$, $b\geq0$, and $G$ be a graph such that $|E(G[X])|\leq(d/2)|X|+b$ for every $X\subseteq V(G)$.
    Let $R\subseteq V(G)$, and let $\cH$ be a hypergraph on $V(G)$ of rank at most $d+1$.
    Suppose some $x\in \R_{\geq 0}^{V(G)}$ satisfies
    \[
        \sum_{v\in N_G[u]}x_v\geq1\quad\forall u\in R
    \]
    and
    \[
        \sum_{v\in e}x_v\geq1\quad\forall e\in E(\cH).
    \]
    Then \cref{alg:rounding} returns a set $A\cup U$ that:
    \begin{enumerate}
        \item\label{item:rounding-domination} dominates $R$,
        \item\label{item:rounding-hitting} intersects every hyperedge of $\cH$, and
        \item\label{item:rounding-size} has size at most $(d+1)\sum_vx_v+4b/d$.
    \end{enumerate}
    This takes one deterministic \LOCAL round.
\end{lemma}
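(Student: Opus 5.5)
The first two items follow directly from the threshold, and the size bound is where the density assumption is needed. For \cref{item:rounding-domination}, every $u\in R$ is either in $N_G[A]$ or, by definition, in $U=R\setminus N_G[A]$, so $R\subseteq N_G[A\cup U]$. For \cref{item:rounding-hitting}, take $e\in E(\cH)$. Since $|e|\leq d+1$ and $\sum_{v\in e}x_v\geq1$, some $v\in e$ has $x_v\geq1/(d+1)$, so $v\in A$. The algorithm uses one round, namely the broadcast of membership in $A$, so locality is immediate.

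For \cref{item:rounding-size}, let $W=V(G)\setminus A$, so every $v\in W$ has $x_v<1/(d+1)$. Then $|A|\leq(d+1)\sum_{v\in A}x_v$, and it suffices to prove
\[
|U|\leq(d+1)\sum_{v\in W}x_v+4b/d.
\]
Every $u\in U$ has $N_G[u]\subseteq W$, since $u$ is not dominated by $A$. Summing the domination constraints over $U$ gives
\[
|U|\leq\sum_{u\in U}x_u+\sum_{uv\in E(G),\,u\in U,\,v\in W}x_v ,
\]
where an edge with both ends in $U$ is counted once in each direction. I would bound the edge sum with an orientation. Apply Hakimi's theorem, i.e.\ its fractional or flow version, to $G[W]$: the hypothesis $|E(G[X])|\leq(d/2)|X|+b$ gives a fractional orientation in which each vertex has in-degree at most $d/2+\delta_v$, with total excess $\sum_v\delta_v$ bounded in terms of $b$. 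Obtaining this is a max-flow/min-cut argument on the density inequality; this excess bookkeeping is the step I expect to be most delicate.

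Now charge each term $x_v$ attached to a pair $(u,v)$ according to the orientation of $uv$.
\begin{itemize}
\item If $uv$ is oriented towards $v$, charge it to $v$. Each $v$ receives at most $d/2+\delta_v$ units of weight $x_v$, for a total of at most $\sum_{v\in W}(d/2+\delta_v)x_v$.
\item If $uv$ is oriented towards $u$, charge it to $u$. Each $u\in U$ receives at most $d/2+\delta_u$ such terms, each of size less than $1/(d+1)$.
\end{itemize}
Combining the two cases,
\[
|U|\leq\Bigl(\frac d2+1\Bigr)\sum_{v\in W}x_v+\frac{d}{2(d+1)}|U|+\sum_v\delta_v\cdot O(1).
\]
Rearranging, the coefficient of $|U|$ becomes $\frac{d+2}{2(d+1)}$, so
\[
|U|\leq(d+1)\sum_{v\in W}x_v+O\Bigl(\sum_v\delta_v\Bigr).
\]
The remaining work is to check that the excess term comes out as at most $4b/d$, not merely $O(b)$. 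Each excess unit contributes at most $\max\{x_v,1/(d+1)\}<1$ before rescaling by $\frac{2(d+1)}{d+2}$. If the constant does not match directly, I would first restrict the orientation to the bipartite-like subgraph on $U\cup N(U)$, or handle the $\delta_v$ more carefully by integrality, to recover exactly $4b/d$.
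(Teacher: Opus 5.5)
Your orientation-and-charging argument is correct and genuinely different from the paper's. Your main-term bookkeeping is right: after rearranging, the coefficient is exactly $\frac{2(d+1)}{d+2}\cdot\frac{d+2}{2}=d+1$.

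What you left unfinished is the additive term, and the estimate you sketch there does not suffice. An excess unit at a vertex $v\in U$ can be paid twice: once as a weight $x_v$ on an in-edge from $U$, and once as a term $<1/(d+1)$ charged to $v$ as a constraint vertex. So the per-unit cost is a sum, not a maximum. Your fallback bound ``$<1$'' only gives $\frac{2(d+1)}{d+2}\,b$, which exceeds $4b/d$ as soon as $d>(1+\sqrt{17})/2$.

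No bipartite restriction or integrality is needed to fix this:
\begin{itemize}
\item Take the source-to-edge-to-vertex-to-sink network on $G[W]$ with capacities $1$, $\infty$, $d/2$. Its min cut is $|E(G[W])|-\max_{X\subseteq W}\bigl(|E(G[X])|-\frac d2|X|\bigr)$. So the unrouted mass, and hence the total excess, satisfies $\sum_v\delta_v\le b$.
\item Every vertex of $W$ has $x_v<1/(d+1)$, so each excess unit costs less than $2/(d+1)$.
\end{itemize}
The excess therefore contributes at most $\frac{2b}{d+1}$ before rescaling, and at most $\frac{2(d+1)}{d+2}\cdot\frac{2b}{d+1}=\frac{4b}{d+2}\le\frac{4b}{d}$ after. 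This completes your proof.

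The paper avoids orientations altogether. It integrates the density hypothesis over the threshold sets $X_t=U\cup\{v\in N_G(U)\mid x_v\geq t\}$ and $Y_t=\{u\in U\mid x_u\geq t\}$ for $t\in[0,1/(d+1)]$. It checks edge by edge that $\int_0^{1/(d+1)}|E(G[X_t])|+|E(G[Y_t])|\,dt$ dominates $\sum_{u\in U}\sum_{v\in N_G(u)}x_v$; for instance, an edge inside $U$ contributes $\frac{1}{d+1}+\min\{x_u,x_v\}\geq x_u+x_v$. This yields the bound $\frac{d|U|}{2(d+1)}+\frac d2x_{N_G[U]}+\frac{2b}{d+1}$ directly, and combined with $|U|-x_U\geq\frac{d}{d+1}|U|$ it gives the additive term $4b/d$.

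The two approaches trade off differently:
\begin{itemize}
\item The paper's layer-cake argument is fully self-contained: it applies the density inequality only to a one-parameter family of sets.
\item Yours imports Hakimi's theorem via max-flow/min-cut, but makes the charging local and transparent.
\end{itemize}
Your slightly better constant $4b/(d+2)$ comes only from keeping $x_U$ and bounding it by $x_W$, rather than by $|U|/(d+1)$. The same change would improve the paper's bound as well.
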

Notice that in the above lemma, one can take $b=0$ when $\mad(G)\leq d$.
Also notice that this is the main bottleneck where the $\mad(G)+1$ comes into play. The losses in the other steps can be made arbitrarily small.

\begin{proof}
The set $A\cup U$ dominates $R$ by definition of the set $U$.
This proves~\cref{item:rounding-domination}.
Every hyperedge $e\in E(\cH)$ contains a vertex of $A$, otherwise we would get a contradiction: $\sum_{v\in e}x_v<|e|/(d+1)\leq1$. 
This proves~\cref{item:rounding-hitting}.
It now remains to bound $|A\cup U|$.

For every vertex $v$ of $N_G[U]$, we have $x_v<1/(d+1)$ as $v\notin A$.
For every $T\subseteq V(G)$, write $x_T=\sum_{v\in T}x_v$.
Let us sum the constraints over $U\subseteq R$:
\[
    \sum_{u\in U}\sum_{v\in N_G[u]}x_v \geq |U|.
\]
Rewriting $\sum_{v\in N_G[u]}x_v$ as $\sum_{v\in N_G(u)}x_v+x_u$ gives
\begin{equation}\label{eq:rounding-demand}
    \sum_{u\in U}\sum_{v\in N_G(u)}x_v \geq |U| - x_U \geq \frac{d}{d+1}|U|
\end{equation}
as $(d+1)x_U\leq|U|$.
For $0\leq t\leq1/(d+1)$, define the sets $X_t=U\cup\{v\in N_G(U)\mid x_v\geq t\}$ and $Y_t=\{u\in U\mid x_u\geq t\}$.
In the following, we compute $\int_0^{1/(d+1)}|E(G[X_t])|+|E(G[Y_t])|\,dt$.
\Cref{fig:rounding-contributions} illustrates the two types of edge contributions that appear below.
An edge $uv$ of $G[U]$ contributes $1/(d+1)+\min\{x_u,x_v\}\geq x_u+x_v$ to the integral.
Indeed, $uv$ is in $G[X_t]$ for all $t\in[0,1/(d+1)]$, and is in $G[Y_t]$ if and only if $t\leq \min\{x_u,x_v\}$.
An edge $uv$ with $u\in U$ and $v\in N_G(U)$ contributes $x_v$: it is in $G[X_t]$ if and only if $t\leq x_v$ and is never in $G[Y_t]$.
All other contributions are non-negative: if both endpoints are in $N_G(U)$, the contribution is $\min\{x_u,x_v\}$. Otherwise, it is zero. Thus, we get:
\begin{align*}
    \sum_{u\in U}\sum_{v\in N_G(u)}x_v
    &\leq\int_0^{1/(d+1)}|E(G[X_t])|+|E(G[Y_t])|\,dt\\
    &\leq\int_0^{1/(d+1)}\frac d2|X_t|+\frac d2|Y_t|+2b\,dt\\
    &=\frac{d|U|}{2(d+1)}+\frac d2\sum_{v\in N_G[U]}\int_0^{1/(d+1)}\indicator_{\{t\leq x_v\}}\,dt+\frac{2b}{d+1}\\
    &=\frac{d|U|}{2(d+1)}+\frac d2\sum_{v\in N_G[U]}x_v+\frac{2b}{d+1}.
\end{align*}
Together with~\cref{eq:rounding-demand}, this implies $|U|\leq(d+1)x_{N_G[U]}+4b/d$ after rearranging.
Finally, $|A|\leq(d+1)x_A$ and $A\cap N_G[U]=\varnothing$, so
\[
    |A\cup U|\leq (d+1)x_A+(d+1)x_{N_G[U]}+4b/d
    \leq(d+1)x_{V(G)}+4b/d.
\]
This proves~\cref{item:rounding-size}.
\end{proof}

\begin{figure}[H]
    \centering
    \begin{tikzpicture}[x=1cm,y=1cm,font=\small,
        rd guide/.style={draw=primary!25,line width=.7pt},
        rd text/.style={text=primary,align=center}]
        \path[use as bounding box] (0,-1.03) rectangle (15.2,5.15);
        \foreach \shift in {0,7.8} {
            \begin{scope}[xshift=\shift cm]
                \draw[rd guide,-{Stealth[length=1.8mm]}] (2.0,.40) -- (6.65,.40);
                \node[rd text,anchor=north] at (2.0,.30) {$0$};
                \node[rd text,anchor=north] at (6.1,.30) {$1/(d+1)$};
                \node[rd text,anchor=west] at (6.65,.40) {$t$};
                \draw[rd guide,densely dotted] (2.0,.40) -- (2.0,2.0);
                \draw[rd guide,densely dotted] (6.1,.40) -- (6.1,2.0);
                \node[rd text,anchor=east,text=myblue] at (1.65,1.80) {$G[X_t]$};
                \node[rd text,anchor=east,text=myred] at (1.65,.95) {$G[Y_t]$};
                \draw[rd guide] (2.0,1.80) -- (6.1,1.80);
                \draw[rd guide] (2.0,.95) -- (6.1,.95);
            \end{scope}
        }

        \path[fill=mygreen!25,rounded corners=8pt] (1.3,3.45) rectangle (5.55,4.6);
        \node[rd text,anchor=west,text=mydarkgreen!50!mygreen] at (1.4,4) {\Large$\bm U$};
        \draw[edge] (2.60,3.90) -- (4.65,3.90);
        \node[vertex,label={[text=primary]above:$u$}] at (2.60,3.90) {};
        \node[vertex,label={[text=primary]above:$v$}] at (4.65,3.90) {};
        \draw[rd guide,densely dotted] (3.35,.40) -- (3.35,2.0);
        \draw[rd guide,densely dotted] (4.85,.40) -- (4.85,2.0);
        \node[rd text,anchor=north] at (3.35,.30) {$x_u$};
        \node[rd text,anchor=north] at (4.85,.30) {$x_v$};
        \draw[draw=myblue,line width=4pt] (2.0,1.80) -- (6.1,1.80);
        \draw[draw=myred,line width=4pt] (2.0,.95) -- (3.35,.95);
        \node[rd text] at (4.0,-.60) {Contribution: $1/(d+1)+x_u$};

        \begin{scope}[xshift=7.8cm]
            \path[fill=mygreen!25,rounded corners=8pt] (1.3,3.45) rectangle (3.2,4.6);
            \node[rd text,anchor=west,text=mydarkgreen!50!mygreen] at (1.4,4) {\Large$\bm U$};
            \draw[edge] (2.60,3.90) -- (4.65,3.90);
            \node[vertex,label={[text=primary]above:$u$}] at (2.60,3.90) {};
            \node[vertex,label={[text=primary]right:$v\in N_G(U)$}] at (4.65,3.90) {};
            \draw[rd guide,densely dotted] (4.85,.40) -- (4.85,2.0);
            \node[rd text,anchor=north] at (4.85,.30) {$x_v$};
            \draw[draw=myblue,line width=4pt] (2.0,1.80) -- (4.85,1.80);
            \node[rd text,font=\footnotesize,fill=white,inner sep=2pt] at (4.05,.95) {never};
            \node[rd text] at (4.2,-.60) {Contribution: $x_v$};
        \end{scope}
    \end{tikzpicture}
    \caption{How an edge contributes to the integral.}
    \label{fig:rounding-contributions}
\end{figure}

\subsection{Hard-to-dominate neighborhoods}\label{sec:auxiliary}

For the remainder of the section, fix $G$ satisfying the assumptions of \cref{thm:main}, and let $0<\rho\leq1$.
Let
\begin{equation}\label{eq:structural-parameters}
\begin{aligned}
    r&=\lceil d/2\rceil,\\
    q&=\left\lceil2\beta+2\beta/\rho\right\rceil,\\
    t&=\left\lfloor\frac{d(r+1)+2b}{2(r+1)-d}\right\rfloor+1.
\end{aligned}
\end{equation}

For the analysis, fix a minimum dominating set $D$ of $G$, so that $|D|=\MDS(G)$.
Let $T$ consist of the vertices $v\notin D$ whose open neighborhood cannot be dominated by at most $q$ vertices of $D$. 

\begin{claim}\label{claim:exceptional}
    We have $|T|\leq\rho\MDS(G)$, and $C:=D\cup T$ is a dominating set of size at most $(1+\rho)\MDS(G)$.
\end{claim}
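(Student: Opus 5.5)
The second statement follows from the first: $D$ dominates $G$, so $C=D\cup T$ does too, and $|C|\leq|D|+|T|\leq(1+\rho)\MDS(G)$. The work is in bounding $|T|$. I would use the standard depth-one shallow-minor argument, contracting stars around dominators, and exploit $\nabla_1(G)\leq\beta$.

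First, build the shallow minor. Assign every vertex $w\notin D$ to one neighbor $\pi(w)\in D$, which exists since $D$ is dominating. For each $v\in D$, the star centered at $v$ with leaves $\pi^{-1}(v)\setminus T$ has radius at most one. These stars are pairwise disjoint and cover $V(G)\setminus T$. Contract each star to a single vertex and keep the vertices of $T$ as singletons. Let $M$ be the resulting simple graph on the $|D|+|T|$ branch vertices, where a branch set $B_v$ is adjacent to $t\in T$ whenever some vertex of $B_v$ is adjacent to $t$ in $G$. Then $M$ is a depth-one shallow minor of $G$. Now restrict to the bipartite subgraph $M'$ between $\{B_v : v\in D\}$ and $T$, which is still a depth-one shallow minor.

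Second, lower-bound the degrees of $T$ in $M'$. Take $t\in T$ and let $S_t=\{v\in D : t\text{ is adjacent in }M'\text{ to }B_v\}$. Every neighbor $w$ of $t$ outside $T$ lies in some $B_v$ with $v\in S_t$, and $w$ is dominated by $v$ (either $w=v$ or $w$ is a leaf adjacent to $v$). The problem is neighbors of $t$ that themselves lie in $T$, since these are not in any star and need not be dominated by $S_t$. To handle them, give each such neighbor $w\in T$ its own dominator $\pi(w)\in D$, which adds at most one dominator per $T$-neighbor. This shows only that $N_G(t)$ is dominated by at most $|S_t|+|N_G(t)\cap T|$ vertices of $D$. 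Since $t\in T$, this count exceeds $q$, so
\[
\deg_{M'}(t)+\deg_{G[T]}(t)\geq q+1.
\]

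Third, count edges. Since $G[T]$ is a subgraph of $G$, it is a shallow minor of $G$, so $|E(G[T])|\leq\beta|T|$. Summing the degree inequality over $T$ gives
\[
|E(M')|\geq (q+1)|T|-2\beta|T|.
\]
On the other hand, the density bound gives $|E(M')|\leq\beta(|D|+|T|)$. Combining,
\[
(q+1-3\beta)|T|\leq\beta|D|.
\]
With $q=\lceil2\beta+2\beta/\rho\rceil$ we get $q+1-3\beta\geq 2\beta/\rho+1-\beta$. This is at least $\beta/\rho$ whenever $\rho\leq1$ and $\beta\geq1$, since $\beta/\rho\geq\beta$. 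Hence $|T|\leq\rho|D|=\rho\MDS(G)$.

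The main obstacle is the second step: neighbors of $t$ that lie in $T$ are not covered by the contracted stars. The fix is to charge them to $G[T]$, whose density is controlled by $\beta$. This charging is exactly why $q$ carries the extra $2\beta$ term. If the bookkeeping leaves too little slack, a fallback is to put both $T$ and its dominators into the minor and account more carefully.
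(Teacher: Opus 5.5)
Your proof is correct and follows the paper's argument: contract stars around dominators in $D$, keep $T$ as singletons, charge each neighbor of $t\in T$ to a single dominator, and apply the $\nabla_1$ density bound with a degree count. The only difference is bookkeeping. The paper keeps the $T$--$T$ edges inside the single minor $J$, where a singleton neighbor in $T$ simply counts as one dominator, so the handshake lemma gives $(q+1)|T|\leq 2\beta(|D|+|T|)$ directly; your split into a bipartite minor plus $G[T]$ costs an extra $\beta|T|$ and yields $(q+1-3\beta)|T|\leq\beta|D|$, which still suffices for this choice of $q$.
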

\begin{claimproof}{claim:exceptional}
We build a depth-one shallow minor $J$ as follows. Start from $G$, keep $T$ as singleton branch sets, and assign every other vertex to a dominator in $D$ (and each vertex of $D$ to itself). Finally, contract the resulting stars to form $J$.
Since $\nabla_1(G)\leq \beta$ and since we can see $J$ as a graph on $T\cup D$, we get $|E(J)|\leq\beta|V(J)|$.
Fix $v\in T$. Each neighbor of $v$ in $J$ is a star dominated by its center vertex, or a singleton dominated by some vertex of $D$. Thus, $N_G(v)$ is dominated by at most $d_J(v)$ vertices of $D$, and $d_J(v)\geq q+1$.
By the handshake lemma: 
\[
    (q+1)|T|\leq2|E(J)|\leq2\beta(\MDS(G)+|T|),
\]
By rearranging and substituting the value of $q$, we get
\begin{align*}
    |T|&\leq\frac{2\beta\MDS(G)}{q+1-2\beta}\leq\rho\MDS(G).
\end{align*}
This finishes the proof.
\end{claimproof}

For each $v\notin D\cup T$, we can fix, by definition of $T$, some $D_v\subseteq D$ of size at most $q$ dominating $N(v)$.

\subsection{Constructing and sparsifying the hypergraph}\label{sec:hypergraph}
For some set $A\subseteq V(G)$, we write $X(A)=\bigcap_{v\in A}N(v)$. Notice that this set is disjoint from $A$.
Let $\cH_0$ be a hypergraph whose hyperedges are the nonempty sets $A$ of size at most $r$ for which no set of at most $q$ vertices outside $A$ dominates $X(A)$.
Every such hyperedge $A$ satisfies $|X(A)|>q$, since otherwise $X(A)$ itself would be an allowed dominating set. $A$'s diameter is therefore at most $2$.
The idea here is that choosing a vertex inside $A$ is much more efficient for dominating $X(A)$, because avoiding $A$ requires more than $q$ vertices to dominate $X(A)$.
By making $A$ a hyperedge, we require, as a hitting-set constraint, to select a vertex of $A$.
All these requirements can be satisfied by selecting $D\cup T$ as a hitting set, because of the following invariant:
\begin{equation}\label{eq:invariant}
    A\cap D_v\neq\varnothing \quad \forall v\in A\setminus(D\cup T).
\end{equation}
Otherwise, $D_v$ would be a dominating set of $X(A)$ outside $A$. 
In the following, we will sparsify the hypergraph $\cH_0$ so that its degree becomes bounded, while maintaining the invariant in \cref{eq:invariant}.
This invariant is stronger than just requiring $D\cup T$ to be a hitting-set, and allows us to proceed with the induction.

\paragraph{Sunflowers.}
For an integer $m\geq 2$, we say a family of distinct sets $A_1,\ldots,A_m$ is a \emph{sunflower} with \emph{kernel} $Z$ if $A_i\cap A_j=Z$ for every $i\neq j$.
The sets $A_i\setminus Z$ are called the \emph{petals}, and are all pairwise disjoint.
\Cref{fig:sunflower-definition} illustrates this definition.

\begin{figure}[!ht]
    \centering
    \begin{tikzpicture}[x=1cm,y=1cm,use Hobby shortcut,
        every node/.style={font=\small,text=primary}]
        \begin{scope}[shift={(3.5,4.1)}]
            \path[fill=mydarkgreen]
                (-.06,-.75) .. controls (.03,-2.0) and (-.28,-3.0) .. (-.18,-4.0)
                -- (-.06,-4.0) .. controls (-.15,-2.95) and (.15,-2.0) .. (.06,-.75) -- cycle;
            \path[fill=mygreen!65]
                ([closed]-.12,-3.15) .. (-.7,-2.65) .. (-1.55,-2.75)
                .. (-1.05,-3.4) .. (-.4,-3.4);
            \path[fill=mygreen!80]
                ([closed]-.13,-3.45) .. (.25,-2.85) .. (1.3,-2.55)
                .. (.95,-3.25) .. (.3,-3.65);
            \draw[edge,draw=mydarkgreen] (-.12,-3.2) .. controls (-.55,-3.05) and (-.95,-2.92) .. (-1.3,-2.82);
            \draw[edge,draw=mydarkgreen] (-.1,-3.45) .. controls (.35,-3.1) and (.7,-2.88) .. (1.1,-2.7);

            \foreach \angle in {0,30,...,330} {
                \begin{scope}[rotate=\angle]
                    \path[fill=myorange!85]
                        ({.95*cos(12)},{-.95*sin(12)})
                        .. controls (1.45,-.30) and (2.20,-.45) .. (2.60,0)
                        .. controls (2.20,.45) and (1.45,.30) .. ({.95*cos(12)},{.95*sin(12)})
                        -- cycle;
                \end{scope}
            }
            \path[fill=myorange!35!black] (0,0) circle[radius=.95];
            \node[text=white,font=\LARGE] at (0,0) {$\bm{Z}$};

            \draw[edge,line width=2pt, draw=myblue]
                ({.95*cos(12)},{-.95*sin(12)})
                .. controls (1.45,-.30) and (2.20,-.45) .. (2.60,0)
                .. controls (2.20,.45) and (1.45,.30) .. ({.95*cos(12)},{.95*sin(12)})
                arc[start angle=12,end angle=348,radius=.95] -- cycle;
        \end{scope}

        \node[anchor=west,text=myblue] at (6.3,4.1) {\LARGE$\bm{A_1}$};
        \draw[edge,-{Latex[round]},line width=1.5pt,draw=primary] (6.15,6.4) to (4.6,5.87);
        \node[anchor=west] at (6.3,6.5) {\LARGE$\bm{A_i\setminus Z}$};
    \end{tikzpicture}
    \caption{A sunflower. The sets pairwise intersect exactly on the kernel $Z$, and the petals are pairwise disjoint.}
    \label{fig:sunflower-definition}
\end{figure}

\begin{lemma}[Sunflower lemma, Erd\H{o}s--Rado~\cite{ErdosRado1960}]\label{lem:sunflower}
    For integers $j,q\geq1$, every family of more than $j!\,q^j$ distinct sets of size $j$ contains a sunflower with $q+1$ sets.
\end{lemma}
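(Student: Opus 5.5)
We argue by induction on $j$, following Erd\H{o}s and Rado. Write $f(j)=j!\,q^j$ for the threshold. The plan is to look at a maximal family of pairwise disjoint sets: either this family is already large enough to be a sunflower with empty kernel, or it is small, and then many sets meet a common vertex, so we can delete that vertex and recurse in size $j-1$.

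\textbf{Base case $j=1$.} A family of more than $q$ distinct singletons already consists of at least $q+1$ pairwise disjoint sets. These form a sunflower with empty kernel.

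\textbf{Inductive step.} Let $j\geq2$ and let $\mathcal{F}$ be a family of more than $j!\,q^j$ distinct $j$-sets. Choose a maximal subfamily $\mathcal{M}\subseteq\mathcal{F}$ of pairwise disjoint sets.

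If $|\mathcal{M}|\geq q+1$, then any $q+1$ members of $\mathcal{M}$ form a sunflower with kernel $\varnothing$, and we are done.

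Otherwise $|\mathcal{M}|\leq q$. Set $S=\bigcup\mathcal{M}$, so $|S|\leq jq$. By maximality, every member of $\mathcal{F}$ meets $S$. By pigeonhole, some $x\in S$ lies in more than
\[
    \frac{j!\,q^j}{jq}=(j-1)!\,q^{j-1}
\]
members of $\mathcal{F}$. Consider the family $\mathcal{F}_x=\{A\setminus\{x\}\mid x\in A\in\mathcal{F}\}$. Its members are distinct $(j-1)$-sets, since the original sets were distinct and all contain $x$, and it has more than $(j-1)!\,q^{j-1}$ members. The induction hypothesis gives a sunflower $B_1,\ldots,B_{q+1}$ in $\mathcal{F}_x$ with some kernel $Z'$.

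It remains to add $x$ back. Set $A_i=B_i\cup\{x\}$. None of the $B_i$ contains $x$, so $A_i\cap A_k=(B_i\cap B_k)\cup\{x\}=Z'\cup\{x\}$ for all $i\neq k$. The $A_i$ are distinct members of $\mathcal{F}$, so they form the required sunflower with kernel $Z'\cup\{x\}$.

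The only step needing care is the pigeonhole count: the exponent and factorial must drop by exactly one, which holds because $|S|\leq jq$. We must also check that removing $x$ keeps the sets distinct and of size exactly $j-1$, which was verified above.
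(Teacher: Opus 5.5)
Your proof is correct, and it is the classical Erd\H{o}s--Rado induction: take a maximal disjoint subfamily, apply pigeonhole to the at most $jq$ vertices it covers, then recurse on the sets through a popular vertex with that vertex removed. The paper does not prove this lemma itself but cites it from Erd\H{o}s--Rado, whose argument is exactly this one; the only unstated detail, that $\mathcal{M}\neq\varnothing$ so the pigeonhole over $S$ makes sense, holds trivially because $\mathcal{F}$ is nonempty.
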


To see why sunflowers are very relevant for us, consider a set of at most $q<m$ vertices intersecting every $A_i$.
Then, it must intersect $Z$. Otherwise, it would contain at least $m>q$ vertices, one for each petal.
Conversely, any set intersecting $Z$ intersects every $A_i$.
Thus, replacing a sunflower by its kernel preserves all hitting sets of size at most $q$.
Note that this argument is standard in the field of kernelization.

In our application, $q$ is instead a bound on the size of the sets $D_v$. We now have all the machinery necessary for sparsifying $\cH_0$.
Starting with $\cH_0$, we construct hypergraphs $\cH_1\subseteq \cH_2 \subseteq \ldots \subseteq \cH_{r-1},\cH_r$.
For each $j=1,\ldots,r-1$, we get $\cH_j$ from $\cH_{j-1}$ by adding the nonempty kernel of every sunflower of $q+1$ distinct hyperedges of size $r-j+1$ in $\cH_{j-1}$.
These stages deal with hyperedges of sizes $r,r-1,\ldots,2$ in this order. Thus, every new hyperedge is smaller than the sets used to construct it.
Finally, get $\cH_r$ from $\cH_{r-1}$ by keeping only its inclusion-minimal hyperedges and deleting all isolated vertices. Set $\cH=\cH_r$.
Let us prove by induction that every new kernel $Z$ satisfies~\cref{eq:invariant}. Let the corresponding sunflower be $A_1,\ldots,A_m$.
Take any $v\in Z\setminus(D\cup T)$. Since $v$ is in every $A_i$, the invariant for these hyperedges gives $A_i\cap D_v\neq\varnothing$ for every $i$.
As $|D_v|\leq q$, the previous argument implies that $D_v$ intersects $Z$, as wanted.
Thus, every hyperedge in the reduction satisfies~\cref{eq:invariant}, as the base case is already satisfied on $\cH_0$ by definition.
Moreover, as $D_v\subseteq D$, every hyperedge considered here intersects $D\cup T$.

\begin{claim}\label{claim:sunflower}
$C\cap V(\cH)$ is a hitting set of $\cH$, and every hyperedge of $\cH_0$ contains a hyperedge of $\cH$.
Every hyperedge of $\cH$ has diameter at most two in $G$.
The hypergraph $\cH$ has rank at most $r$ and maximum degree at most $B_1:=\sum_{j=0}^{r-1}j!\,q^j$.
Moreover,
\begin{align*}
    |V(\cH)|\leq rB_1(1+\rho)\MDS(G).
\end{align*}
\end{claim}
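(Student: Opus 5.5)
I will check the parts of \cref{claim:sunflower} in the order stated, using the invariant \cref{eq:invariant} established just above together with the fact that every hyperedge produced along the way intersects $D\cup T$.

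\emph{Hitting set and containment.} Every hyperedge of $\cH$ is a hyperedge of some $\cH_j$, so it intersects $D\cup T=C$. Since the vertices of $\cH$ are exactly those lying in hyperedges, $C\cap V(\cH)$ hits $\cH$. For the containment, note that $\cH_0\subseteq\cH_{r-1}$: the first $r-1$ stages only add kernels. The final stage keeps the inclusion-minimal hyperedges, and every hyperedge of a finite family contains a minimal one. So each $A\in E(\cH_0)$ contains a hyperedge of $\cH$.

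\emph{Diameter and rank.} Every hyperedge of $\cH$ is a subset of some hyperedge $A$ of $\cH_0$, because kernels are subsets of the sets forming the sunflower. Since $X(A)\neq\varnothing$ (indeed $|X(A)|>q$), all vertices of $A$ share a common neighbor, so $A$ has diameter at most two in $G$. The same then holds for its subsets, as distances are measured in $G$. The rank is at most $r$ because hyperedges of $\cH_0$ have size at most $r$ and kernels are strictly smaller.

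\emph{Degree bound (main step).} Fix a vertex $v$ and a size $s\in\{1,\dots,r\}$, and consider the hyperedges of $\cH$ of size $s$ containing $v$. Removing $v$ gives distinct $(s-1)$-sets. I expect the main obstacle to be showing there are at most $(s-1)!\,q^{s-1}$ of them.

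\begin{itemize}
\item Suppose not, and $s\geq2$. By \cref{lem:sunflower} the $(s-1)$-sets contain a sunflower of $q+1$ sets. Adding $v$ back yields a sunflower $A_1,\dots,A_{q+1}$ of size-$s$ hyperedges of $\cH$ with kernel $Z\ni v$, which is nonempty.
\item These $A_i$ lie in $\cH_{j-1}$ for $j=r-s+1$. This needs the care that a size-$s$ set present in the final hypergraph was already present before stage $j$, since later stages only add smaller sets.
\item Hence stage $j$ added $Z$, with $Z\subsetneq A_i$. This contradicts the minimality of $A_i$ in $\cH$.
\item For $s=1$ there is at most one hyperedge, namely $\{v\}$, matching the term $0!\,q^0=1$.
\end{itemize}

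Summing over $s$ gives a degree of at most $\sum_{j=0}^{r-1}j!\,q^j=B_1$. One subtlety to verify: the sunflower lemma counts a family of \emph{more than} $j!q^j$ sets, and the sets must be distinct. Both hold because hyperedges are distinct sets.

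\emph{Vertex count.} Every hyperedge of $\cH$ contains a vertex of $C\cap V(\cH)$, so
\[
|E(\cH)|\le B_1|C|\le B_1(1+\rho)\MDS(G)
\]
by \cref{claim:exceptional}. Each vertex of $\cH$ lies in some hyperedge, since isolated vertices were deleted, and each hyperedge has at most $r$ vertices. Therefore $|V(\cH)|\le r|E(\cH)|\le rB_1(1+\rho)\MDS(G)$.
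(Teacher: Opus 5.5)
Your proposal is correct and follows the paper's proof essentially step by step: hitting via the invariant, containment of minimal hyperedges, inherited rank and diameter, the per-size sunflower contradiction summed to $B_1$, and the double count $|V(\cH)|\le r|E(\cH)|\le rB_1|C|$. Your check that every size-$s$ hyperedge of $\cH$ already lies in $\cH_{j-1}$ for $j=r-s+1$ makes explicit what the paper compresses into ``a nonempty proper kernel was added at some point during the reduction.''
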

\begin{claimproof}{claim:sunflower}
First, notice that \cref{eq:invariant} already proves that $C\cap V(\cH)$ is a hitting set of $\cH$, as $D_v\subseteq D$.
Every hyperedge of $\cH_0$ is in $\cH_{r-1}$, and contains an inclusion-minimal hyperedge of $\cH_{r-1}$, which is in $\cH$.
Moreover, the hyperedges of $\cH_0$ have size at most $r$ and diameter at most two in $G$.
By construction, this is still true for $\cH_{r-1}$, and hence for $\cH$.
To prove the degree bound, fix some vertex $v\in V(\cH)$.
There is at most one singleton hyperedge containing $v$.
For $1\leq j\leq r-1$, more than $j!\,q^j$ hyperedges of size $j+1$ containing $v$ would, after deleting $v$, contain a sunflower with $q+1$ sets by \cref{lem:sunflower}.
Adding $v$ back proves that a nonempty proper kernel was added at some point during the reduction, contradicting the minimality of $\cH$. This proves the bound.
Finally, by \cref{claim:exceptional} and because $C\cap V(\cH)$ is a hitting set of $\cH$, we get:
\[
    |V(\cH)|\leq r|E(\cH)|\leq rB_1|C|\leq rB_1(1+\rho)\MDS(G).\qedhere
\]
\end{claimproof}

\subsection{Bounding the size of the residual set}\label{sec:residual}

We first need the following standard result.
\begin{claim}\label{claim:bipartite-density}
    $K_{r+1,t}$ is not a subgraph of $G$.
    Moreover, $r\leq d+1$.
\end{claim}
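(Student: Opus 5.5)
We argue by contradiction, using only the density assumption of \cref{thm:main} applied to the subgraph $H\cong K_{r+1,t}$. Suppose $G$ contains $K_{r+1,t}$ as a subgraph. That subgraph has $r+1+t$ vertices and $(r+1)t$ edges, so the hypothesis $|E(H)|\leq\frac d2|V(H)|+b$ gives
\[
    (r+1)t\leq\frac d2(r+1+t)+b .
\]
Rearranging, this becomes
\[
    t\Bigl(r+1-\frac d2\Bigr)\leq\frac d2(r+1)+b .
\]
Since $r=\lceil d/2\rceil\geq d/2$, the factor $r+1-d/2\geq1>0$. Dividing by it yields
\[
    t\leq\frac{d(r+1)+2b}{2(r+1)-d}.
\]
But by \cref{eq:structural-parameters}, $t$ is defined as one more than the floor of exactly this quantity, so $t$ is strictly larger than it. This is a contradiction, so $K_{r+1,t}$ is not a subgraph of $G$.

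The only point needing care is positivity of the denominator $2(r+1)-d$, which the choice $r=\lceil d/2\rceil$ guarantees. No shallow-minor or $\nabla_1$ argument is needed.

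For the second statement, $r=\lceil d/2\rceil< d/2+1\leq d+1$ since $d\geq 1$ (indeed $d/2+1\leq d+1$ holds for every $d\geq0$). Hence $r\leq d+1$. This bound is presumably needed later so that hyperedges of $\cH$, which have rank at most $r$, meet the rank hypothesis of \cref{lem:rounding}.
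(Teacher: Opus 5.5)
Your proof is correct and follows the paper's argument. Both apply the density hypothesis to $H\cong K_{r+1,t}$, use $2(r+1)-d>0$ and the fact that $t$ strictly exceeds $\frac{d(r+1)+2b}{2(r+1)-d}$, and bound $r=\lceil d/2\rceil\leq d/2+1\leq d+1$; the only cosmetic difference is that you rearrange the assumed inequality by contradiction, whereas the paper rewrites the definition of $t$ directly as $(r+1)t>\frac d2(r+1+t)+b$.
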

\begin{proof}
Since $r\geq d/2$, we have $2(r+1)-d\geq2>0$.
The definition of $t$ gives $\bigl(2(r+1)-d\bigr)t>d(r+1)+2b$ and $(r+1)t>\frac d2(r+1+t)+b$.
$K_{r+1,t}$ has $(r+1)t$ edges and $r+1+t$ vertices.
If $K_{r+1,t}$ was a subgraph of $G$, this would contradict the density assumption of \cref{thm:main}.
Finally, $r=\lceil d/2\rceil\leq d/2+1\leq d+1$.
\end{proof}

Define $X=\bigcup_{e\in E(\cH)}X(e)$, so that every hitting set of $\cH$ dominates $X$. Also define $R_0=V(G)\setminus X$, the rest of the vertices not in $X$, that may be undominated by a hitting set of $\cH$.

\begin{claim}\label{claim:residual-degree}
Every vertex has a constant number of neighbors in $R_0$:
\begin{align*}
    |N_G(v)\cap R_0|\leq B_2:=q^{r-1}(qt+r)\quad\forall v\in V(G).
\end{align*}
In particular, $R_0$ has size at most $(B_2+1)\MDS(G)$.
\end{claim}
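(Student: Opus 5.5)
The plan is an iterative "common neighbourhood growing" argument. It shows that a vertex with many neighbours in $R_0$ forces a large $K_{r+1,t}$, which \cref{claim:bipartite-density} forbids.

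First I record the key property of $R_0$. Let $A$ be a hyperedge of $\cH_0$. By \cref{claim:sunflower}, $A$ contains some hyperedge $e\in E(\cH)$. Since $X(\cdot)$ is inclusion-reversing, $X(A)\subseteq X(e)\subseteq X$, so $X(A)\cap R_0=\varnothing$. Contrapositively, take any nonempty set $A$ with $|A|\leq r$ and $X(A)\cap R_0\neq\varnothing$. Then $A\notin E(\cH_0)$, so some set $S$ of at most $q$ vertices outside $A$ dominates $X(A)$, and in particular dominates $W:=X(A)\cap R_0$. At most $q$ vertices of $W$ lie in $S$ itself. The rest are neighbours of $S$, so by pigeonhole some $u\in S$ satisfies $|N_G(u)\cap W|\geq(|W|-q)/q$. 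Since $u\notin A$, the set $A'=A\cup\{u\}$ has size $|A|+1$ and $X(A')\cap R_0\supseteq N_G(u)\cap W$.

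Next I iterate this step, starting from $A_1=\{v\}$, where $X(A_1)\cap R_0=N_G(v)\cap R_0$. Define thresholds by $f(r+1)=t-1$ and $f(j)=q f(j+1)+q$ for $1\leq j\leq r$. The step above shows: if $|A|=j\leq r$ and $|X(A)\cap R_0|>f(j)$, then $|X(A')\cap R_0|>f(j+1)$.

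Suppose $|N_G(v)\cap R_0|>f(1)$. After $r$ steps we obtain a set $A_{r+1}$ of $r+1$ vertices whose common neighbourhood contains at least $t$ vertices. That common neighbourhood is disjoint from $A_{r+1}$, so $G$ contains $K_{r+1,t}$ as a subgraph, contradicting \cref{claim:bipartite-density}. Hence $|N_G(v)\cap R_0|\leq f(1)$.

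Unrolling the recursion gives $f(1)=q^r(t-1)+q+q^2+\dots+q^r$. Using $q+\dots+q^{r-1}\leq(r-1)q^{r-1}$, this is at most $q^rt+(r-1)q^{r-1}\leq q^{r-1}(qt+r)=B_2$. The only points needing care are that each step uses a set of size at most $r$, so it is eligible for $\cH_0$, and that it is nonempty. Both hold because the $A_j$ are nonempty with $|A_j|=j\leq r$ when the step is applied.

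For the size bound, every vertex of $R_0$ is either in $D$ or adjacent to a vertex of $D$. Hence $|R_0|\leq|D|+\sum_{w\in D}|N_G(w)\cap R_0|\leq(B_2+1)\MDS(G)$.

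The main obstacle is the pigeonhole step. It has to account for vertices of $W$ that are themselves in the dominating set $S$, which is the source of the additive $+q$. It also has to keep the resulting constant within the stated $B_2$. Both are handled by the recursion for $f$ and the loose geometric-sum estimate above.
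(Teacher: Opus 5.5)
Your proof is correct and follows essentially the same approach as the paper: grow $A_1=\{v\}$ one vertex at a time, using that sets meeting $R_0$ in their common neighbourhood are not hyperedges of $\cH_0$, pigeonhole to keep $|X(A_i)\cap R_0|\geq |X(A_i)\cap R_0|/q-1$, and finish with the forbidden $K_{r+1,t}$ from \cref{claim:bipartite-density}. The only difference is cosmetic: you run one more growth step to reach $A_{r+1}$ and exhibit $K_{r+1,t}$ directly, with explicit thresholds $f(j)$, whereas the paper stops at $A_r$ and argues that $q$ outside vertices can dominate at most $qt$ vertices of $X(A_r)$.
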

\begin{claimproof}{claim:residual-degree}
To prove this bound, we first need a preliminary observation. If $X(A)$ intersects $R_0$ and $|A|\leq r$, then $A\notin E(\cH_0)$: otherwise $A$ would contain some $e\in E(\cH)$, and so $X(A)\subseteq X(e)\subseteq X$, a contradiction. Therefore, by the definition of $\cH_0$, there is a set of at most $q$ vertices outside $A$ dominating $X(A)$.

Now suppose for the sake of contradiction that some vertex $v$ violates the bound in~\cref{claim:residual-degree}, and set $A_1=\{v\}$, so that $X(A_1) = N_G(v)$.
We inductively build the sets $A_2, \dots, A_r$ by adding well-chosen vertices, such that $|A_i|=i$ and such that each $X(A_i)$ intersects $R_0$.
This trivially holds for $A_1$, as $|N_G(v)\cap R_0|>B_2$.
Suppose the sets $A_1,\dots,A_i$ have been constructed, where $1\leq i<r$.
By the previous observation, there is a set of at most $q$ vertices outside $A_i$ dominating $X(A_i)$.
By the pigeonhole principle, it contains a vertex $z$ dominating at least $|X(A_i)\cap R_0|/q$ vertices of $X(A_i)\cap R_0$.
Choose such a $z$ and set $A_{i+1}=A_i\cup\{z\}$. We have:
\[
    |X(A_{i+1})\cap R_0| =|X(A_i)\cap R_0\cap N_G(z)|\geq |X(A_i)\cap R_0\cap N_G[z]|-1\geq \frac{|X(A_i)\cap R_0|}{q}-1.
\]
By choice of $B_2$, we get that $|X(A_r)\cap R_0|>qt$.
But each $z\notin A_r$ dominates at most $t$ vertices of $X(A_r)$, as otherwise $K_{r+1,t}$ would be a subgraph of $G$ (contradicting \cref{claim:bipartite-density}), with $A_r\cup\{z\}$ on one side and $t$ vertices of $X(A_r)$ on the other. 
Thus, $q$ vertices outside $A_r$ can only dominate at most $qt$ vertices of $X(A_r)$, i.e. not its entirety.
This contradicts the existence of a set of at most $q$ vertices outside $A_r$ dominating $X(A_r)$.
Finally, since $D$ dominates $R_0$, the degree bound gives $|R_0|\leq(B_2+1)\MDS(G)$.
\end{claimproof}

\subsection{Preparing the LP}\label{sec:lp-preparation}
Consider the following LP:
\begin{align*}
    \min\sum_v x_v\quad\text{subject to}\quad
    \left\{
    \begin{aligned}
        x&\geq0,\\
        \sum_{v\in e}x_v&\geq1 &&\forall e\in E(\cH),\\
        ~~\sum_{v\in N[w]}x_v&\geq1 &&\forall w\in R_0.
    \end{aligned}
    \right.
\end{align*}
It accomplishes two things: first, we want to hit every hyperedge of $\cH$. In particular, this will fractionally dominate every vertex in $X$. Secondly, we ask for vertices not in $X$, that are maybe not dominated by the hitting set requirement, to be fractionally dominated.
An integral solution to this gives a dominating set for our whole graph.
This is a stronger requirement than a usual dominating set, but it is a very good approximation of the original problem. 
Indeed, as $C$ contains a dominating set and is a hitting set of $\cH$, the integral optimum of this LP is at most $|C|\leq(1+\rho)\MDS(G)$. Thus, when $\eps$ goes to zero, so does $\rho$, and the integral optimum is arbitrarily close to the MDS size.

We use the following neighborhood-complexity bound.

\begin{proposition}[Gajarsk{\'y} et al.~{\cite[Lemma~4.3(2)]{GajarskyEtAl2017}}]\label{prop:neighborhood-complexity}
For every $\beta\geq1$, there is a constant $c_\beta$ such that, for every graph $G$ with $\nabla_1(G)\leq\beta$ and every $R\subseteq V(G)$, the vertices $v\in V(G)\setminus R$ have at most $c_\beta|R|$ distinct nonempty neighborhoods $N_G(v)\cap R$.
\end{proposition}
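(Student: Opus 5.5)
We need to prove the neighborhood complexity bound: for $\nabla_1(G)\leq\beta$ and $R\subseteq V(G)$, the number of distinct nonempty traces $N_G(v)\cap R$ over $v\notin R$ is at most $c_\beta|R|$.

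The plan is to build a depth-one shallow minor whose vertices are $R$ together with one representative per distinct trace, and to count traces by their sizes.

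\textbf{Setup.} Pick one representative $v_S\notin R$ for each distinct nonempty trace $S$, and let $W$ be the set of representatives. Split the traces into small ones ($|S|\leq s$) and large ones ($|S|>s$), for a threshold $s$ depending only on $\beta$.

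\textbf{Large traces.} The bipartite graph between $W_{\text{large}}$ and $R$ is a subgraph of $G$, so it has density at most $\nabla_0(G)\leq\nabla_1(G)\leq\beta$. Each large representative contributes more than $s$ edges, hence
\[
s|W_{\text{large}}|<\beta\bigl(|W_{\text{large}}|+|R|\bigr).
\]
Choosing $s=2\beta$ gives $|W_{\text{large}}|\leq|R|$.

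\textbf{Small traces, step 1: a bound on $R$-subsets.} The key fact is that the number of $k$-subsets of $R$ that appear as \emph{full} traces, or that are subsets of some common neighborhood, is $O_{\beta,k}(|R|)$. To prove it, contract a star around each representative: use representatives as singleton branch sets, or contract a representative together with some of its $R$-neighbors.

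\textbf{Small traces, step 2: pairs.} Consider the graph $H$ on $R$ in which $a,b$ are adjacent whenever some $w\in W$ is adjacent to both. Greedily choose a maximal set of representatives whose witnessed pairs are pairwise disjoint, and contract edges $wa$ (merging $w$ into $a$). This gives a depth-one minor in which each chosen representative yields an edge $ab$. This bounds the matching-like structure, but not all of $H$.

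\textbf{Small traces, step 3: the standard fix.} The standard argument orients the edges and uses the fact that $H$ is $O(\beta)$-degenerate. Suppose $H$ had a subgraph of density much larger than $\beta$. Process the representatives one by one and assign each edge of $H$ to a distinct representative. Each representative $w$ then witnesses edges $a_wb_w$, and we merge $w$ with $a_w$. The merges are vertex-disjoint stars centered at the $a_w$. This is not immediate, however, since many edges may share a representative. To handle that, charge each $H$-edge through the common neighbor: $w$ adjacent to the set $S$ realizes the clique on $S$, and contracting $w$ into one vertex of $S$ realizes a star from that vertex. Then a random selection in which each $w$ picks one vertex of $S$ and the contractions stay disjoint shows that $H$ has density $O(\beta s)$.

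\textbf{Small traces, step 4: counting.} With $H$ being $O(\beta s)$-degenerate, the possible small traces are sets of size at most $s$ whose members are pairwise adjacent in $H$, i.e. cliques of $H$. In a $k$-degenerate graph the number of cliques is at most $2^{k}|R|$. Since distinct representatives have distinct traces, this gives $|W_{\text{small}}|\leq 2^{O(\beta s)}|R|$.

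\textbf{Conclusion and main obstacle.} Summing the two cases gives $c_\beta=2^{O(\beta^2)}+1$. The main obstacle is step 3, the degeneracy of the common-neighbor graph $H$. To handle it I would follow the Gajarsk\'y et al. charging argument: orient $H$ via a degeneracy ordering of $G$, and for each $H$-edge pick a witness $w$ at which both endpoints come later than $w$. This bounds the number of witnesses per vertex, so the disjoint stars can be selected greedily while losing only a constant factor.
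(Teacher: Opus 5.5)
Your route works, but only through the random-selection sentence in Step 3, which should become the actual argument. Two repairs are needed.

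First, $H$ must be built from the representatives with $|S_w|\le s$ only. With all of $W$, as in Step 2, a single vertex outside $R$ adjacent to all of $R$ makes $H$ complete, although $\nabla_1<1$.

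Second, the averaging must be done on every $H[Y]$, since a density bound for $H$ alone does not give degeneracy. For $Y\subseteq R$:
\begin{itemize}
\item fix one witness $w_e$ for each edge $e$ of $H[Y]$;
\item let each witness choose $a_w$ uniformly from $S_w\cap Y$, which has between $2$ and $s$ elements;
\item contract $w$ into $a_w$.
\end{itemize}
The branch sets $\{y\}\cup\{w : a_w=y\}$ are disjoint stars, so the resulting depth-one minor on $Y$ has at most $\beta|Y|$ edges. Each edge of $H[Y]$ appears in it with probability at least $2/s$. Hence $|E(H[Y])|\le\frac{\beta s}{2}|Y|$ for every $Y$, so $H$ is $\lfloor\beta s\rfloor$-degenerate. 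Your clique count with $s=2\beta$, plus your correct bound on large traces, then gives $c_\beta=2^{2\beta^2}+1$.

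The rest of Steps 1--3 should be discarded:
\begin{itemize}
\item Step 1 is never used, and it is false as stated: take all pairs inside a single large trace.
\item The $O(\beta)$-degeneracy asserted at the start of Step 3 is unjustified. Your own averaging gives $O(\beta s)$, which is all you need.
\item Edges of $H$ cannot in general be assigned to distinct representatives, since one representative witnesses every pair in its trace.
\item The degeneracy-ordering charging in your last paragraph is unnecessary, and it does not work as described: nothing forces a witness to precede both endpoints.
\end{itemize}

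The paper itself gives no proof; it imports the statement from Gajarsk\'y et al. The standard proof of that lemma avoids both your size threshold and the averaging by building the minor adaptively. Process the representatives one by one. Whenever the current one has two $R$-neighbors not yet adjacent in the minor on $R$ built so far, contract it into one of them. Each contraction creates a new edge of a depth-one minor on $R$, so at most $\beta|R|$ representatives are contracted. Every remaining representative has a trace that is a clique in the final minor. That minor is itself $\lfloor 2\beta\rfloor$-degenerate, so there are at most $4^{\beta}|R|$ such traces.

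Making the traces cliques in a genuine depth-one minor, rather than in the common-neighbor graph, is what yields the single-exponential constant $4^{\beta}+\beta$ instead of your $2^{O(\beta^2)}$. The paper only needs the existence of some $c_\beta$, so your cleaned-up argument is equally sufficient.
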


To approximate this LP locally, we work on a slightly smaller graph than $G$.
Note that this idea is an adaptation of Heydt et al.~\cite[Sections~6.3--6.4]{HeydtEtAl2025}.
We modify it slightly to keep every vertex of $V(\cH)$, so that the additional hitting constraints survive.
Keep all vertices of $V(\cH)\cup R_0$, remove the other vertices with no neighbors in $R_0$, and only keep one vertex (called the representative) for each distinct nonempty $R_0$-neighborhood of the vertices outside $V(\cH)\cup R_0$ (say, the one with the smallest identifier).
Let $G'$ be the induced graph on those remaining vertices.
Define the set $C'\subseteq V(G')$ as follows. It will only be used for the analysis. 
Start with the vertices of $C$ in $G'$ and also add, for every vertex $v\in C\setminus V(G')$ with a neighbor in $R_0$, its $R_0$-representative chosen when constructing $G'$.

\begin{claim}\label{claim:compression}
The set $C'$ is a solution to the LP on $G'$. Moreover, we have $|C'|\leq|C|$, and
\begin{align*}
    |V(G')|\leq\bigl(rB_1(1+\rho)+(1+c_\beta)(B_2+1)\bigr)\MDS(G).
\end{align*}
\end{claim}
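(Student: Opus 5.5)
Three things need checking, and I will handle them in the order: feasibility of $C'$ for the LP on $G'$, the cardinality comparison $|C'|\leq|C|$, and the size bound on $V(G')$. The LP on $G'$ is the same LP as before, restricted to $G'$. The hyperedge constraints still make sense because $V(\cH)\subseteq V(G')$. The domination constraints for $w\in R_0$ use $N_{G'}[w]$, and $R_0\subseteq V(G')$.

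\emph{Hitting constraints.} By \cref{claim:sunflower}, $C\cap V(\cH)$ is a hitting set of $\cH$. Since $V(\cH)\subseteq V(G')$, every vertex of $C\cap V(\cH)$ is kept in $C'$. Hence $C'$ hits every hyperedge of $\cH$, and the integral vector $\indicator_{C'}$ satisfies all hyperedge constraints.

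\emph{Domination constraints.} Fix $w\in R_0$. Since $C\supseteq D$ is dominating in $G$, some $v\in C\cap N_G[w]$ exists. There are two cases.
\begin{itemize}
  \item If $v\in V(G')$, then $v\in C'$. As $G'$ is induced and contains $w$, we get $v\in N_{G'}[w]$.
  \item If $v\notin V(G')$, then $v\neq w$, since $w\in R_0\subseteq V(G')$. So $v$ is a neighbor of $w$, and $N_G(v)\cap R_0$ is nonempty. Also $v\notin V(\cH)\cup R_0$. Its representative $v'$ was therefore kept in $G'$ and added to $C'$. It satisfies $N_G(v')\cap R_0=N_G(v)\cap R_0\ni w$. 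Since $G'$ is induced, $v'\in N_{G'}(w)$.
\end{itemize}
In both cases $w$ is dominated by $C'$ in $G'$, so $C'$ is a solution to the LP on $G'$.

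\emph{Cardinality.} Define a map $C\to C'$ that sends $v\mapsto v$ if $v\in V(G')$, and sends $v$ to its representative if $v\notin V(G')$ but $v$ has a neighbor in $R_0$. Every element of $C'$ arises this way, so the map is a surjection from a subset of $C$ onto $C'$. This gives $|C'|\leq|C|$.

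\emph{Size of $G'$.} The vertices of $G'$ split into three groups.
\begin{itemize}
  \item $V(\cH)$, of size at most $rB_1(1+\rho)\MDS(G)$ by \cref{claim:sunflower}.
  \item $R_0$, of size at most $(B_2+1)\MDS(G)$ by \cref{claim:residual-degree}.
  \item The representatives, one per distinct nonempty set $N_G(v)\cap R_0$ over $v\notin R_0$. By \cref{prop:neighborhood-complexity} applied with $R=R_0$, there are at most $c_\beta|R_0|\leq c_\beta(B_2+1)\MDS(G)$ of them.
\end{itemize}
Summing gives the stated bound $\bigl(rB_1(1+\rho)+(1+c_\beta)(B_2+1)\bigr)\MDS(G)$.

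The only delicate point is the case $v\notin V(G')$ in the domination check. There one must use that $G'$ is an induced subgraph and that representatives share the same $R_0$-neighborhood. The rest is bookkeeping.
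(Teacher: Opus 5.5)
Your proof is correct and follows the same route as the paper. $C'$ retains $C\cap V(\cH)$ for the hitting constraints, representatives preserve $R_0$-neighborhoods for the domination constraints, and the size bound sums \cref{claim:sunflower}, \cref{claim:residual-degree}, and \cref{prop:neighborhood-complexity} with $R=R_0$. Your case split and surjection just spell out what the paper states in one line each. Note only that $V(\cH)$ and $R_0$ may overlap, so your ``split'' into three groups is really a union bound, which is all that is needed.
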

\begin{claimproof}{claim:compression}
$C'$ is still a hitting-set of $\cH$ in $G'$, since it contains all vertices in $C\cap V(\cH)$.
It also dominates $R_0$ in $G'$, since it contains all vertices in $C\cap R_0$, and each vertex in $C\setminus C'$ either has no neighbor in $R_0$ or has been replaced by a vertex with the same $R_0$-neighborhood.
Thus $C'$ is a solution for the LP on $G'$, and $|C'|\leq |C\cap V(G')|+|C\setminus V(G')|=|C|$.
\Cref{prop:neighborhood-complexity} and \cref{claim:exceptional,claim:sunflower,claim:residual-degree} give
\[
    |V(G')|\leq\bigl(rB_1(1+\rho)+(1+c_\beta)(B_2+1)\bigr)\MDS(G).\qedhere
\]
\end{claimproof}
Define
\begin{equation}\label{eq:cutoff}
\begin{aligned}
    \Gamma&=\left\lceil\frac{2\beta}{\rho}\bigl(rB_1(1+\rho)+(1+c_\beta)(B_2+1)\bigr)\right\rceil,\\
    P&=\{v\in V(G')\mid d_{G'}(v)>\Gamma\}.
\end{aligned}
\end{equation}
Let $F=G'-P$, $\cH'=\{e\in E(\cH)\mid e\cap P=\varnothing\}$, and $R_1=R_0\setminus N_G[P]$.
$\cH'$ may be viewed as having all vertices of $F$ as its ground set, at the expense of adding isolated vertices.
\Cref{fig:algorithm-sets} illustrates these sets and their relations.
\begin{figure}[H]
    \centering
    \begin{tikzpicture}[x=1cm,y=1cm,use Hobby shortcut,
        every node/.style={font=\small,text=primary}]
        \def\algorithmResidualRegion{
            ([closed]8.55,6.55) .. (10.7,7.65) .. (13.0,6.95)
            .. (13.65,4.8) .. (12.95,2.25) .. (10.85,1.85)
            .. (8.8,2.55) .. (8.35,4.6)}
        \def\algorithmClosedNeighborhood{
            ([closed]3.55,5.4) .. (4.25,7.15) .. (7.1,7.65)
            .. (9.65,7.0) .. (10.6,5.1) .. (10.15,3.0)
            .. (8.15,1.75) .. (5.2,1.95) .. (3.65,3.25)}

        \path[fill=myblue!20]
            ([closed].25,4.6) .. (.8,8.5) .. (6.8,8.9)
            .. (13.8,8.3) .. (14.3,4.3) .. (13.5,.65)
            .. (7,.25) .. (.9,.75);
        \node[anchor=west] at (2,8.35) {\Large $\bm{V(G')}$};

        \path[fill=myblue!50] \algorithmResidualRegion;
        \begin{scope}
            \clip \algorithmResidualRegion;
            \path[fill=myblue,even odd rule]
                (-1,-1) rectangle (15,10) \algorithmClosedNeighborhood;
        \end{scope}

        \path[fill=mygreen]
            ([closed]2.25,5.9) .. (3.5,7.0) .. (4.9,6.65)
            .. (5.6,5.2) .. (5.5,3.25) .. (4.2,1.6)
            .. (2.45,1.95) .. (1.95,3.9);
        \node[text=mydarkgreen,font=\large] at (2.65,4.5) {\Large$\bm C'$};

        \path[fill=myorange]
            ([closed]5.0,4.75) .. (5.65,5.85) .. (7.15,6.0)
            .. (8.0,5.05) .. (7.7,3.85) .. (6.1,3.65)
            .. (5.15,4.0);
        \node[text=Orange!60!black,font=\large] at (7.1,4.8) {\Large$\bm P$};

        \node[text=myorange!50!myred,anchor=south] at (7.1,7.8) {\Large$\bm{N_{G'}[P]}$};
        \node at (9.35,5.95) {$R_0\setminus R_1$};
        \node[font=\large] at (12.05,4.6) {\Large$\bm R_1$};

        \draw[edge,line width=2pt,draw=myred,dashed]
            ([closed]4.35,5.3) .. (4.85,5.95) .. (5.9,5.75)
            .. (6.2,5.05) .. (5.2,4.95);
        \draw[edge,line width=2pt,draw=myred,dashed]
            ([closed]4.5,3.75) .. (5.15,4.25) .. (6.2,4.1)
            .. (6.45,3.55) .. (5.4,3.25);
        \begin{scope}[xshift=-0.8cm,yshift=-0.8cm]
            \draw[edge,line width=2pt,draw=mypurple!50]
                ([closed]1.75,2.95) .. (2.4,3.55) .. (3.6,3.35)
                .. (3.9,2.7) .. (2.6,2.4);
        \end{scope}
        \draw[edge,line width=2pt, draw=mypurple!50]
            ([closed]2.0,6.1) .. (2.75,6.75) .. (3.85,6.45)
            .. (4.05,5.8) .. (2.9,5.6);

        \draw[edge,line width=2pt,draw=myorange!50!myred,fill=none]
            \algorithmClosedNeighborhood;

        \draw[edge,-{Stealth[length=2.5mm]},draw=mygreen]
            (4.2,1.6) .. controls (6.0,.7) and (9.4,.7) .. (10.3,1.9);
        \node[text=mygreen!50!ForestGreen,font=\footnotesize] at (7.5,.65) {dominates};
        
        \draw[edge,draw=myred,dashed]
            ([closed]1.0,-.55) .. (1.35,-.25) .. (1.7,-.55) .. (1.35,-.8);
        \node[anchor=west] at (1.9,-.55) {$\cH\setminus\cH'$: hit by $P$};
        \draw[edge,draw=mypurple!50]
            ([closed]8.0,-.55) .. (8.35,-.25) .. (8.7,-.55) .. (8.35,-.8);
        \node[anchor=west] at (8.9,-.55) {$\cH'$: disjoint from $P$};
    \end{tikzpicture}
    \caption{The orange curve represents the border of $N_{G'}[P]$.
    Inside it, $P$ dominates $R_0\setminus R_1$. The darker region $R_1$ is outside of $N_{G'}[P]$ and is not dominated by $P$.
    The set $C'\setminus P$ is a hitting set of $\cH'$ and dominates $R_1$.}
    \label{fig:algorithm-sets}
\end{figure}

Let us translate our results from before to the graph $G'$.

\begin{claim}\label{claim:cutoff}
    The following hold:
    \begin{itemize}
        \item $P$ has size at most $\rho\MDS(G)$,
        \item $F$ has maximum degree at most $\Gamma$ and satisfies the density assumption of \cref{thm:main},
        \item $C'\setminus P$ is a hitting set of $\cH'$, dominates $R_1$ in $F$, and has size at most $(1+\rho)\MDS(G)$,
        \item $\cH'$ has rank at most $r\leq d+1$ and maximum degree at most $B_1$.
    \end{itemize}
\end{claim}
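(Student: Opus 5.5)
The four items are largely independent, so I will verify them in order. Each rests on \cref{claim:compression}, the definition of $\Gamma$ in \cref{eq:cutoff}, and the structural facts from \cref{claim:sunflower,claim:bipartite-density}.

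\textbf{Bounding $|P|$.} I use that $G'$ is an induced subgraph of $G$. Since $\nabla_0\leq\nabla_1$, every subgraph of $G'$ has density at most $\beta$, so $|E(G')|\leq\beta|V(G')|$. Every vertex of $P$ has degree more than $\Gamma$ in $G'$, so the handshake lemma gives
\[
\Gamma|P|<2|E(G')|\leq2\beta|V(G')|.
\]
By \cref{claim:compression},
\[
|V(G')|\leq\bigl(rB_1(1+\rho)+(1+c_\beta)(B_2+1)\bigr)\MDS(G).
\]
Since $\Gamma$ is at least $(2\beta/\rho)$ times this coefficient, we get $|P|\leq\rho\MDS(G)$. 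If one prefers not to use $\nabla_0\leq\nabla_1$ (trivial, since subgraphs are depth-one shallow minors), one can instead use the density hypothesis $|E(H)|\leq\frac d2|V(H)|+b$. However, the choice of $\Gamma$ with the factor $2\beta$ indicates the $\nabla_1$ route is intended.

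\textbf{Properties of $F$.} The degree bound is immediate. Each $v\in V(F)$ is not in $P$, so $d_{G'}(v)\leq\Gamma$, and deleting $P$ only lowers degrees. For the density assumption, $F$ is an induced subgraph of $G$. Hence every subgraph $H\subseteq F$ is a subgraph of $G$, and the inequality $|E(H)|\leq\frac d2|V(H)|+b$ is inherited. Likewise $\nabla_1(F)\leq\beta$.

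\textbf{$C'\setminus P$.} Let $e\in E(\cH')$. By \cref{claim:compression}, $C'$ hits $e$, and $e\cap P=\varnothing$, so $C'\setminus P$ hits $e$. Now let $w\in R_1$. By \cref{claim:compression}, $C'$ dominates $w$ in $G'$, so some $u\in C'\cap N_{G'}[w]$ exists. Because $w\notin N_G[P]$, this $u$ is not in $P$. Hence $u\in C'\setminus P$, and both $u,w\in V(F)$: indeed $w\in R_0\subseteq V(G')$ and $w\notin P$, so $uw$ (or $u=w$) survives in $F$. The size bound follows from $|C'\setminus P|\leq|C'|\leq|C|\leq(1+\rho)\MDS(G)$, using \cref{claim:compression,claim:exceptional}.

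\textbf{Rank and degree of $\cH'$.} These follow since $\cH'$ is a subfamily of $E(\cH)$. \Cref{claim:sunflower} gives rank at most $r$ and degree at most $B_1$, and \cref{claim:bipartite-density} gives $r\leq d+1$. Every hyperedge of $\cH'$ avoids $P$ and lies in $V(\cH)\subseteq V(G')$, so it lies in $V(F)$; this makes $V(F)$ a legitimate ground set.

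The only step with any real content is the bound on $|P|$. There the point to check is that the density used in the handshake argument matches the constant $2\beta/\rho$ built into $\Gamma$, which the $\nabla_0\leq\nabla_1\leq\beta$ bound provides.
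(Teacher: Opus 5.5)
Your proof is correct and follows the paper's argument: the handshake bound on $|P|$ via $|E(G')|\le\beta|V(G')|$ (as $G'$ is a subgraph, hence a depth-one shallow minor, of $G$) combined with \cref{claim:compression} and the choice of $\Gamma$, inheritance of the degree and density bounds to $F$, and the observation that $P$ meets no hyperedge of $\cH'$ and no closed neighborhood of a vertex of $R_1$. Two small points: the handshake inequality should be non-strict to cover the case $P=\varnothing$, and your aside about using $|E(H)|\le\frac d2|V(H)|+b$ instead would not yield $|P|\le\rho\MDS(G)$ with this particular $\Gamma$ (since $d/2$ may exceed $\beta$, and there is an additive $b$), but you do not rely on it.
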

\begin{claimproof}{claim:cutoff}
As $G'$ is a depth-one shallow minor, we have $|E(G')|\leq\beta|V(G')|$.
The handshake lemma, \cref{claim:compression}, and \cref{eq:cutoff} together give
\begin{align*}
    |P|\leq\frac{2\beta|V(G')|}{\Gamma}\leq\rho\MDS(G).
\end{align*}
Every vertex of $F$ in $G'$ has degree at most $\Gamma$, so $F$ has maximum degree at most $\Gamma$.
Every hyperedge of $\cH'$ is disjoint from $P$, so it is intersected by $C'\setminus P$.
Similarly, every vertex of $R_1$ has no vertex in $P$ inside its closed neighborhood, so it is dominated by $C'\setminus P$ in $F$.
The $(1+\rho)\MDS(G)$ bound follows from \cref{claim:exceptional,claim:compression}.
As said before, $F$ is a subgraph of $G'$, so the density bound holds for all its subgraphs.
The bounds of the rank and degree of $\cH'$ follow from \cref{claim:bipartite-density,claim:sunflower}.
\end{claimproof}

\begin{claim}\label{claim:recovery}
    If $S\subseteq V(F)$ is a hitting set of $\cH'$ and dominates $R_1$ in $F$, then $P\cup S$ dominates $G$.
\end{claim}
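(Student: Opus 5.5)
The plan is a case split on where a vertex $w\in V(G)$ lies: either $w\in X$ or $w\in R_0=V(G)\setminus X$.

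\emph{Case $w\in X$.} Then $w\in X(e)$ for some $e\in E(\cH)$. We show that $P\cup S$ hits every hyperedge of $\cH$.
\begin{itemize}
\item If $e\cap P\neq\varnothing$, this is immediate.
\item Otherwise $e\in E(\cH')$, and $S$ intersects $e$ because $S$ is a hitting set of $\cH'$.
\end{itemize}
In both cases some vertex $u\in (P\cup S)\cap e$ exists. Since $w\in X(e)=\bigcap_{v\in e}N_G(v)$, the vertex $w$ is adjacent to $u$ in $G$, so $w$ is dominated. This requires $e\subseteq V(G')$; it holds because $V(\cH)\subseteq V(G')$ by construction, so the hyperedges make sense in $G'$ and in $F$.

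\emph{Case $w\in R_0$.} Either $w\in N_G[P]$, and then $P$ dominates it, or $w\in R_1=R_0\setminus N_G[P]$. In the latter case $S$ dominates $w$ in $F$, so some $s\in S$ has $s\in N_F[w]$. Since $F=G'-P$ is an induced subgraph of $G'$, which is an induced subgraph of $G$, adjacency in $F$ implies adjacency in $G$. Hence $s\in N_G[w]$. One small point: $R_1\subseteq V(F)$, because $R_0\subseteq V(G')$ and the vertices of $R_1$ lie outside $N_G[P]$, hence outside $P$. So "dominates $R_1$ in $F$" is meaningful.

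Together these cases cover $V(G)=X\cup R_0$, so $P\cup S$ dominates $G$. There is no real obstacle. The only step needing care is making sure the hyperedges of $\cH$ not in $\cH'$ are exactly those hit by $P$, which is the definition of $\cH'$, and that the relevant vertices live in $G'$, which holds by construction.
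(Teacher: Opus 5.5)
Your proof is correct and follows the paper's argument: you split $V(G)$ into $X$ and $R_0$, dominate $X$ using that every hyperedge of $\cH$ is hit either by $P$ or, being in $\cH'$, by $S$, and dominate $R_0$ by using $P$ for $R_0\setminus R_1$ and $S$ for $R_1$. Your additional checks, namely that $e\subseteq V(F)$ for $e\in E(\cH')$, that $R_1\subseteq V(F)$, and that adjacency in the induced subgraph $F$ carries over to $G$, are details the paper leaves implicit.
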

\begin{claimproof}{claim:recovery}
Every hyperedge of $\cH$ either intersects $P$, or is in $\cH'$ and thus intersects $S$.
Every vertex in some $X(e)$ for $e\in E(\cH)$ is either dominated by some vertex of $P$ or some vertex of $S$.
Thus, $P\cup S$ dominates $X=\bigcup_{e\in E(\cH)}X(e)$.
All vertices not in $X$ are in $R_0$ by definition of $R_0$.
$P$ dominates $R_0\setminus R_1$ by definition of $R_1$, and $S$ dominates $R_1$.
Thus, all vertices are dominated by $P\cup S$.
\end{claimproof}

Consider the LP with variables on $V(F)$, hitting constraints on $E(\mathcal H')$, and domination constraints on $N_F[w]$ for $w\in R_1$.

\begin{claim}\label{claim:lp-incidence}
    The row and column sums of the constraint matrix are bounded by $\max\{r,\Gamma+1\}$ and $B_1+\Gamma+1$, respectively.
\end{claim}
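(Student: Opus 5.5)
The plan is to write the constraint matrix $M$ explicitly and bound each row and each column separately. Rows are indexed by the constraints: one for each hyperedge $e\in E(\cH')$ and one for each $w\in R_1$. Columns are indexed by the variables $x_v$ for $v\in V(F)$. All entries are $0/1$, so a row sum counts the variables in that constraint, and a column sum counts the constraints containing that variable.

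\textbf{Row sums.} A hitting row for $e\in E(\cH')$ has exactly $|e|$ nonzero entries, and $|e|\leq r$ by \cref{claim:cutoff}, since the rank of $\cH'$ is at most $r$. A domination row for $w\in R_1$ has support $N_F[w]$. Its size is $d_F(w)+1\leq\Gamma+1$, because $F$ has maximum degree at most $\Gamma$ by \cref{claim:cutoff}. One point needs checking: $w$ must actually be a vertex of $F$. This holds because $R_1\subseteq R_0\subseteq V(G')$, and every $w\in R_1$ lies outside $N_G[P]$, in particular outside $P$. Hence every row sum is at most $\max\{r,\Gamma+1\}$.

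\textbf{Column sums.} Fix $v\in V(F)$. The column of $v$ has a $1$ in each hitting row with $v\in e$; there are at most $B_1$ of these, because $\cH'\subseteq\cH$ has maximum degree at most $B_1$ by \cref{claim:sunflower,claim:cutoff}. It also has a $1$ in each domination row $w\in R_1$ with $v\in N_F[w]$, that is, $w\in N_F[v]$; there are at most $|N_F[v]|\leq\Gamma+1$ of these. Adding the two counts gives a column sum of at most $B_1+\Gamma+1$.

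There is no real obstacle here. The only subtlety is to state that the domination constraints are taken in $F$ and not in $G$, so that the degree bound $\Gamma$ of $F$ applies, together with the small membership check $R_1\subseteq V(F)$ above.
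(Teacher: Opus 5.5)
Your proposal is correct and follows the paper's argument: the row sums are bounded by the rank $r$ of $\cH'$ and by $|N_F[w]|\leq\Gamma+1$, and the column sums by the hypergraph degree bound $B_1$ plus $|N_F[v]|\leq\Gamma+1$. Your extra check that $R_1\subseteq V(F)$ is a detail the paper leaves implicit, and you verify it correctly.
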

\begin{claimproof}{claim:lp-incidence}
First, notice that $F$ has maximum degree at most $\Gamma$ by the definition of $P$.
A hitting constraint contains at most $r$ variables, while a domination constraint for $w\in R_1$ contains the at most $\Gamma+1$ variables of $N_F[w]$.
Each variable $x_v$ appears in at most $B_1$ hitting constraints, since $\cH'$ has maximum degree at most $B_1$, and is in at most $|N_F[v]\cap R_1|\leq\Gamma+1$ domination constraints.
\end{claimproof}

\subsection{Using the LP algorithm}\label{sec:locality}

Given an $m\times n$ $0$--$1$ matrix $A$, define its incidence network as the bipartite graph with bipartition sizes $m$ and $n$.
Each of the $m$ constraints is mapped to one vertex of the first partition, and each of the $n$ variables is mapped to one vertex of the second partition. Both mappings are bijective.
Constraint $i$ and variable $j$ are adjacent in the bipartite graph if and only if $A_{ij}=1$.

Let us denote the vector that contains only ones as $\mathbf{1}$.
We use the following LP algorithm, also used by Heydt et al.~\cite[Section~6.2]{HeydtEtAl2025}.
\begin{proposition}[Kuhn, Moscibroda, and Wattenhofer~{\cite[Corollary~4.1]{KuhnEtAl2006}}]\label{prop:covering-lp}
Let $A\in\{0,1\}^{m\times n}$ be a matrix with no zero row, with row sums at most $M\geq2$ and column sums at most $Q\geq2$. Interpret $m$ as the number of constraints and $n$ as the number of variables.
For every $0<\rho\leq1$, given $M$ and $Q$, there exists a deterministic algorithm computing a $(1+\rho)$-approximate solution on the incidence network of $A$ to $\min\{\sum_{j=1}^n x_j\mid Ax\geq\mathbf{1},\ x\geq0\}$ in $O(\rho^{-4}\log M\log Q)$ rounds in the \LOCAL model.
\end{proposition}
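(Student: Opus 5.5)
Since \cref{prop:covering-lp} is taken from Kuhn, Moscibroda, and Wattenhofer, the paper will simply cite it. If I were to prove it, I would rebuild their primal--dual scheme for fractional covering LPs, run on the incidence network. Each variable node $j$ and each constraint node $i$ is one processor, and one round lets a variable exchange its current value with all constraints it appears in. The invariants are the following.
\begin{itemize}
    \item The primal vector $x$ only increases.
    \item A dual vector $y\geq0$ on the constraints is maintained alongside it.
    \item Every increase of $\sum_j x_j$ is charged to an increase of $\sum_i y_i$, with a loss factor of at most $1+O(\rho)$.
    \item At the end, $y/(1+O(\rho))$ is dual feasible, meaning that every column sum $\sum_{i\ni j}y_i$ is at most $1+O(\rho)$.
\end{itemize}
Weak duality then gives $\sum_j x_j\leq(1+O(\rho))\,\mathrm{OPT}$, and rescaling $\rho$ by a constant yields the stated $(1+\rho)$.

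The algorithm has two nested phases. A constraint $i$ is \emph{uncovered} while $\sum_j A_{ij}x_j<1$. The \emph{dynamic degree} $\delta_j$ of a variable is the number of uncovered constraints containing it, so $\delta_j\leq Q$ by the column-sum bound.
\begin{enumerate}
    \item The outer loop runs over geometric thresholds $Q,\ Q/(1+\rho),\ Q/(1+\rho)^2,\dots$ down to $1$. In the phase with threshold $\tau$, only variables with $\delta_j\geq\tau$ are active. This mimics the sequential greedy rule of picking columns with the best cost-effectiveness.
    \item The inner loop raises all active $x_j$ multiplicatively in lockstep, starting from about $1/M$ and growing by a factor $(1+\rho)$ per step. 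Each uncovered constraint splits the resulting primal increase evenly among its active variables and records it as dual $y_i$. A constraint stops growing once covered.
\end{enumerate}
Because $x_j$ can never usefully exceed $1$, the inner loop needs only $O(\rho^{-1}\log M)$ steps. The outer loop has $O(\rho^{-1}\log Q)$ phases. Each step needs $O(1)$ communication rounds.

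The accounting rests on two facts. First, inside a phase every active variable has $\delta_j$ within a factor $1+\rho$ of $\tau$. Second, a single multiplicative step overshoots coverage by at most a factor $1+\rho$. Together these let me bound the primal increase by $(1+O(\rho))$ times the dual increase. Dual feasibility comes from the threshold structure. Once variable $j$'s degree has dropped below $\tau$, the dual mass it received in earlier phases is of order $\sum_{\text{phases}}(\text{increment})/\tau$, a geometric-type sum bounded by $1+O(\rho)$. Every quantity used is local to the incidence network, and the algorithm is deterministic.

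The main obstacle is this dual-feasibility and overshoot bookkeeping: it is the part that drives the exact polynomial dependence on $\rho$. A naive analysis gives $\rho^{-2}$ from the two geometric loops. To reach $(1+\rho)$ rather than $O(1)$ accuracy, I would refine each multiplicative step into $O(\rho^{-1})$ substeps, and likewise refine the degree thresholds. That costs another $\rho^{-2}$, giving the total $O(\rho^{-4}\log M\log Q)$. I would first try to verify the per-column dual bound under this refined schedule, and fall back to citing \cite[Corollary~4.1]{KuhnEtAl2006} for the precise constants.
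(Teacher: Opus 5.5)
\paragraph{Verdict.} The paper gives no proof of this proposition. It imports it verbatim from \cite[Corollary~4.1]{KuhnEtAl2006}, so your fallback of citing it is exactly what the paper does. The self-contained reconstruction you sketch would not prove the statement, though. It fails at the step you yourself flagged as the crux, dual feasibility.

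\paragraph{Why the dual bound fails.} You use a $0/1$ ``uncovered'' indicator and thresholds on the count $\delta_j$. Coverage gained while the threshold is $\tau$ is charged at rate about $1/\tau$. At that time a fixed variable $k$ still has $\delta_k<(1+\rho)\tau$ uncovered constraints. So $\sum_{i\ni k}y_i$ behaves like a sum of (coverage gained)$/\delta_k$ over the successive drops of $\delta_k$. That is a harmonic sum of order $\ln Q$, the greedy set-cover loss, not a geometric one, and the loss is actually attained.

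\paragraph{A concrete bad instance.} Take $\rho\le 1/3$ and constraints arranged in two rows of $W$ columns.
\begin{itemize}
\item There are two row variables $R_1,R_2$, so $\mathrm{OPT}\le 2$.
\item There are block variables $S_1,\dots,S_k$. Block $S_i$ covers both rows on $c_i$ columns, with $2c_i\ge(1+\rho)^2(c_1+\dots+c_i)$.
\end{itemize}
Every constraint lies in exactly one row variable and one block, so $M=2$. Also $Q\le 2W$ and $k=\Theta(\log W)=\Theta(\log Q)$.

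Now follow the algorithm. The first threshold at most $2c_i$ exceeds $(1+\rho)(c_1+\dots+c_i)$. At that point each row has only $c_1+\dots+c_i$ uncovered constraints, and every smaller block has fewer still. So whenever some variable is active, it is a single block $S_i$.

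That block is raised alone until its constraints are covered, i.e.\ to $x_{S_i}\ge 1$, and the rows are never activated. Your output therefore has value at least $k$, and each $R_r$ collects dual about $1/2$ per block. Subdividing the multiplicative steps or refining the threshold grid does not help, since a finer grid only separates $S_i$ from the rows more easily. Your scheme is therefore an $O(\log Q)$-approximation here, not an $O(1)$ one that needs sharpening.

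\paragraph{What is missing.} Reaching $1+\rho$ needs a genuinely different mechanism. For instance, constraint weights could decay gradually (e.g.\ exponentially) with the current coverage instead of dropping from $1$ to $0$ at coverage $1$. Dual near-feasibility would then come from a potential argument, as in multiplicative-weights covering algorithms. That is the nontrivial content you would have to take from \cite{KuhnEtAl2006}.

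\paragraph{Two smaller slips.}
\begin{itemize}
\item Your charging should split each increase $\Delta x_j$ over the $\delta_j$ uncovered constraints containing $j$, not over a constraint's active variables.
\item Starting newly active variables at $1/M$ lets a constraint jump by up to $1$ in a single step, so the overshoot can be a factor $2$, not $1+\rho$.
\end{itemize}
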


In this subsection, we check that the LP incidence network and machinery can be implemented in the original graph.

\begin{lemma}\label{lem:locality}
The hypergraphs $\cH,\cH'$, the induced graphs $G',F$, and the sets $P,R_0,R_1$ can be computed in a constant number of rounds, independent of $\rho$.
This requires only polynomial local computation for fixed $d,\beta,b,\rho$.
One round on the incidence network of the residual LP can be simulated in at most two rounds on $G$.
\end{lemma}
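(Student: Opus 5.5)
The plan is to show that every object in the statement is a deterministic function of a bounded-radius ball, with radii depending only on $r$ (hence on $d$). Each vertex then gathers its ball of that radius in $O_d(1)$ rounds and computes its own status by brute force. None of the radii involve $\rho$. The parameters $q$, $t$, $\Gamma$ and $c_\beta$ depend on $\rho$, but they only enter the local computation, never the radius.

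\textbf{Step 1: $\cH_0$.} A set $A$ with $|A|\leq r$ can only be a hyperedge if $X(A)\neq\varnothing$, so $A$ has diameter at most $2$. Moreover $X(A)\subseteq N_G(A)$, and any vertex dominating part of $X(A)$ lies within distance $2$ of $A$. Deciding whether $A\in E(\cH_0)$ therefore only needs the ball of radius $3$ around any vertex of $A$. The test enumerates all sets of at most $q$ vertices outside $A$ within that ball and checks whether one of them dominates $X(A)$. This takes time $n^{O(q)}$, which is polynomial for fixed parameters.

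\textbf{Step 2: the sunflower stages.} I would argue by induction on $j$ that membership of a set in $E(\cH_j)$ is determined by a ball of radius $\lambda_j=3+2j$ around any of its vertices. A kernel $Z$ added at stage $j$ is nonempty and contained in every petal set. So all sets of the sunflower lie within distance $2$ of a fixed vertex $z\in Z$, since they have diameter at most $2$ by \cref{claim:sunflower}. Hence $z$ sees all candidate sunflowers once it knows $E(\cH_{j-1})$ for sets near it, which gives the radius increase.

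The stage is canonical: it adds the kernels of \emph{all} sunflowers, and no choices are made. Two vertices therefore never disagree about a hyperedge, and determinism is automatic. The final inclusion-minimality step compares a hyperedge only with its subsets, which is again local. The local work is polynomial: there are $n^{O(r)}$ candidate sets, and $(q+1)$-tuples of them can be enumerated in time $n^{O(rq)}$.

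This step is where I expect the main care to be needed. The point to make precise is that a kernel is contained in each of its sunflower sets, so the radius only grows additively over the $r$ stages.

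\textbf{Step 3: $X$, $R_0$, $G'$, $P$, $F$, $R_1$.} A vertex $w$ is in $X$ iff some hyperedge $e\subseteq N_G(w)$ exists. This is decided inside $w$'s ball after Step 2, and $R_0$ is its complement. A vertex outside $V(\cH)\cup R_0$ is kept in $G'$ iff it has the smallest identifier among the vertices with its nonempty $R_0$-neighborhood. All such competitors share an $R_0$-neighbor, so they lie at distance at most $2$, and this is decided in constant radius. Then $d_{G'}(v)$, membership in $P$ and $F$, $\cH'$ and $R_1=R_0\setminus N_G[P]$ are each one or two further hops.

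\textbf{Step 4: simulating the incidence network.} I would host the variable $x_v$ at $v$, and the domination constraint of $w\in R_1$ at $w$. The hitting constraint of $e\in E(\cH')$ is hosted at the vertex of $e$ with minimum identifier. A domination constraint and its variables $N_F[w]$ are adjacent or equal in $G$, so they cost one round. A hyperedge has diameter at most $2$ in $G$ by \cref{claim:sunflower}, so its host reaches each variable of $e$ within two rounds. One round on the incidence network thus takes at most two rounds on $G$. Each vertex hosts at most $1+B_1$ constraints, so the simulation is well defined.
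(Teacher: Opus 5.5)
Your proposal is correct and follows the paper's proof: constant-radius views justified by the diameter-two property of hyperedges, canonical sunflower stages with a local minimality check, distance-two agreement on $R_0$-representatives, and hosting each constraint within distance two in $G$ of its variables. The one difference is that your radius grows additively as $3+2j$. The paper avoids this by observing that a vertex of a new kernel lies in every set of its sunflower, so its incident hyperedges in $\cH_j$ are determined by its own incident hyperedges in $\cH_{j-1}$, and radius three suffices for all stages; your $O(d)$ radius is nonetheless also independent of $\rho$, so the lemma still follows.
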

\begin{proof}
A vertex $v$ determines its incident $\cH_0$-hyperedges by only looking at distance 3.
Indeed, a candidate hyperedge $A$ containing $v$ has diameter at most two whenever its common neighborhood is nonempty.
Therefore, any vertex useful for dominating $X(A)$ is at distance at most two from $v$.
This means that $v$ can check all candidate hyperedges and all outside dominating sets of size at most $q$ from this view.
If $v$ is in some hyperedge $A$ of $\cH_j$, then it also is in every hyperedge of $\cH_i$ ($i<j$) used to build $A$.
Thus, $v$ can compute all its incident hyperedges in $\cH_1,\dots, \cH_{r-1}$ in only three rounds.
Vertices can send the list of hyperedges they belong to to all their distance-two neighbors. With this information, they remove hyperedges that are not minimal.
This creates $V(\cH)$ in constant rounds.
In constant additional rounds, the vertices also compute the sets $X$ and $R_0$.

Vertices with the same nonempty $R_0$-neighborhood are at distance at most two. They can thus agree on a representative, say the vertex with the smallest identifier, in constant rounds.
Therefore, $G'$ is computed in constant rounds. Then, each vertex computes its degree in $G'$ to decide its membership to $P$.
Two more rounds are needed to compute the hyperedge and domination constraints satisfied by $P$. With this information, the sets $F$, $\cH'$, and $R_1$ are computed.

Each vertex of $F$ simulates its own LP variable.
Every hyperedge chooses a vertex (say its smallest identifier vertex) which simulates its constraint, and each $w\in R_1$ simulates its domination constraint.
By \cref{claim:sunflower}, the vertices that simulate constraints are within distance two in $G$ of the vertices that simulate their variables.
Note that vertices outside $F$ can still transmit messages. Therefore, in total, one round on the incidence network costs at most two rounds on the original graph.

Let us now prove that the computations are polynomial in the size of that view.
Each vertex checks only sets of at most $r$ vertices, outside dominating sets of size at most $q$, and families of $q+1$ hyperedges.
Furthermore, there are polynomially many such choices in its view.
There are only $r-1$ reductions, and all kernels have size at most $r$, thus the computation of $\cH$ is polynomial.
The other operations are also polynomial, proving the wanted result.
\end{proof}

We can now prove the following.

\begin{lemma}\label{lem:local-lp}
For every $0<\rho\leq1$, the residual LP on $F$ with hyperedges $\cH'$ and domination requests $R_1$ has a fractional solution $x$ with
\begin{align*}
    \sum_{v\in V(F)}x_v\leq(1+\rho)^2\MDS(G).
\end{align*}
Moreover, one can compute a solution in $O_{d,\beta,b}(\rho^{-4}\log(1/\rho)^2)$ \LOCAL rounds, with polynomial local computation for fixed $d,\beta,b,\rho$.
\end{lemma}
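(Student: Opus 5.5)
The plan is to combine \cref{claim:cutoff}, \cref{claim:lp-incidence}, \cref{prop:covering-lp} and \cref{lem:locality}.

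\emph{Existence and value.} By \cref{claim:cutoff}, the indicator vector of $C'\setminus P$ is feasible for the residual LP: it hits every hyperedge of $\cH'$ and dominates $R_1$ in $F$. Its value is at most $(1+\rho)\MDS(G)$, so the LP optimum $\mathrm{OPT}$ satisfies $\mathrm{OPT}\leq(1+\rho)\MDS(G)$. Before invoking \cref{prop:covering-lp}, I check that the constraint matrix has no zero row. A hyperedge constraint is nonempty because hyperedges are nonempty. A domination constraint for $w\in R_1$ contains $w$ itself: $w\in R_0\subseteq V(G')$, and $w\notin P$ because $w\notin N_G[P]$, so $w\in V(F)$. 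Variables appearing in no constraint can simply be fixed to zero.

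\emph{Applying the LP algorithm.} By \cref{claim:lp-incidence}, row sums are bounded by $M=\max\{r,\Gamma+1,2\}$ and column sums by $Q=\max\{B_1+\Gamma+1,2\}$. \Cref{prop:covering-lp} then returns, on the incidence network, a solution of value at most $(1+\rho)\mathrm{OPT}\leq(1+\rho)^2\MDS(G)$. It takes $O(\rho^{-4}\log M\log Q)$ rounds on the incidence network. By \cref{lem:locality}, each such round is simulated in at most two rounds of $G$, after a constant-round preprocessing phase that computes $\cH',F,R_1$. The preprocessing cost does not depend on $\rho$ in a way that dominates, since its radius depends only on the locality arguments.

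\emph{Round count (main obstacle).} The step needing the most care is showing that $\log M\log Q=O_{d,\beta,b}(\log(1/\rho)^2)$. I will trace the dependence of each constant on $\rho$:
\begin{itemize}
\item $r$ and $t$ do not depend on $\rho$;
\item $q=\Theta_\beta(1/\rho)$;
\item $B_1=\sum_{j<r}j!\,q^j=O_{d}(q^{r-1})$, which is polynomial in $1/\rho$ with degree depending only on $d$;
\item $B_2=q^{r-1}(qt+r)$ is also polynomial in $1/\rho$;
\item $c_\beta$ is constant, hence $\Gamma=O(\rho^{-1}(B_1+B_2))$ is polynomial in $1/\rho$ with exponent $O(d)$.
\end{itemize}
Therefore $\log M$ and $\log Q$ are both $O_{d,\beta,b}(\log(1/\rho))$ for $\rho\leq1/2$; the case of $\rho$ near $1$ is absorbed into the constant. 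The total round count is then $O_{d,\beta,b}(\rho^{-4}\log(1/\rho)^2)$.

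\emph{Local computation.} Polynomial local computation follows from \cref{lem:locality} for the preprocessing. The KMW algorithm itself performs only polynomial-time arithmetic per node and per round. Its view has polynomial size, and the number of rounds is constant for fixed $\rho$.
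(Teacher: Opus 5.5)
Your proposal is correct and follows the paper's proof essentially step for step: $C'\setminus P$ is feasible with value at most $(1+\rho)\MDS(G)$ by \cref{claim:cutoff}, the row and column bounds of \cref{claim:lp-incidence} (with $M,Q$ polynomial in $1/\rho$) are fed into \cref{prop:covering-lp}, and each incidence-network round costs two rounds of $G$ by \cref{lem:locality}. Your explicit check that every domination row contains $w$ itself is slightly more careful than the paper's; the paper additionally notes that the KMW thresholds are rounded to rational values (powers of $1+\rho/10$) so the local arithmetic is exact and polynomial, a point you assert without justification.
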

\begin{proof}
By \cref{claim:cutoff}, the set $C'\setminus P$ is a solution to the LP and has size at most $(1+\rho)\MDS(G)$.
It thus suffices to compute a $(1+\rho)$-approximation of the residual LP to get the inequality in the statement.
If there are no constraints, use $x=0$.
Otherwise, there cannot be any zero row.
By \cref{claim:lp-incidence}, we can apply \cref{prop:covering-lp} with $M=\max\{2,r,\Gamma+1\}$ and $Q=\max\{2,B_1+\Gamma+1\}$.
We also have 
\begin{align*}
    q&=O_\beta(\rho^{-1}),\\
    B_1&=O_{d,\beta}(\rho^{-(r-1)}),\\
    B_2&=O_{d,\beta,b}(\rho^{-r}),\\
    \Gamma&=O_{d,\beta,b}(\rho^{-(r+1)}).
\end{align*}
Thus, the LP algorithm of Kuhn, Moscibroda, and Wattenhofer~{\cite[Corollary~4.1]{KuhnEtAl2006}} runs in $O_{d,\beta,b}(\rho^{-4}\log(1/\rho)^2)$ rounds in the incidence network.
By \cref{lem:locality}, one round on the incidence network of the residual LP can be simulated in at most two rounds on $G$. Thus, we get the same asymptotic running time in $G$.

Note that the LP algorithm only does computations on the (bounded number of) entries in each row and column.
All calculations can be done exactly with rational numbers, by choosing $\rho$ rational and rounding the LP algorithm's bounds up to suitable powers of $1+\rho/10$.
As this is only a technical detail, we refer the reader to the original analysis of the LP algorithm~\cite[Algorithm~1 and Theorem~4.1]{KuhnEtAl2006}.
The running time is polynomial in the neighborhood size, but its exponent may depend on $d,\beta,b,\rho$.
\end{proof}

\subsection{Taking care of the additive bound in the density}\label{sec:additive}

We now put together our previous algorithms. We get a returned dominating set whose size is a constant times the optimal, plus a small additive term.
\begin{proposition}\label{prop:additive-algorithm}
For every $0<\rho\leq1$, there is a deterministic \LOCAL algorithm that, under the assumptions of \cref{thm:main}, computes a dominating set of size at most
\begin{align*}
    \bigl(\rho+(d+1)(1+\rho)^2\bigr)\MDS(G)+\frac{4b}{d},
\end{align*}
in $O_{d,\beta,b}(\rho^{-4}\log(1/\rho)^2)$ rounds. It only requires polynomial local computation for fixed $d,\beta,b,\rho$.
\end{proposition}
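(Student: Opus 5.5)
The algorithm is \cref{alg:main} without the initial test for small domination number. The pieces are already in place, so the proof mostly consists of assembling them. First, by \cref{lem:locality}, the vertices compute $\cH$, $\cH'$, $G'$, $F$, $P$, $R_0$ and $R_1$ in a number of rounds depending only on $d,\beta,b,\rho$, with polynomial local computation. Then, by \cref{lem:local-lp}, they compute a fractional solution $x$ on $V(F)$ of the residual LP with $\sum_{v\in V(F)}x_v\leq(1+\rho)^2\MDS(G)$. This takes $O_{d,\beta,b}(\rho^{-4}\log(1/\rho)^2)$ rounds of $G$, since one round on the incidence network is simulated in two rounds of $G$. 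Finally, they run \cref{alg:rounding} on $(F,R_1,x,d)$ in one more round and output $P\cup A\cup U$.

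For correctness, apply \cref{lem:rounding} to the graph $F$, the set $R=R_1$ and the hypergraph $\cH'$. The hypotheses hold for the following reasons.
\begin{itemize}
\item By \cref{claim:cutoff}, $F$ is a subgraph of $G$, so $|E(F[X])|\leq(d/2)|X|+b$ for every $X\subseteq V(F)$.
\item By \cref{claim:cutoff} together with \cref{claim:bipartite-density}, $\cH'$ has rank at most $r\leq d+1$.
\item The LP constraints of \cref{lem:local-lp} are exactly the hitting constraints of $\cH'$ and the domination constraints on $N_F[w]$ for $w\in R_1$.
\end{itemize}
One must be careful that the rounding is done in $F$, with closed neighborhoods $N_F[\cdot]$, because the LP constraints are stated in $F$.

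\Cref{lem:rounding} then says that $S:=A\cup U\subseteq V(F)$ hits $\cH'$, dominates $R_1$ in $F$, and satisfies $|S|\leq(d+1)(1+\rho)^2\MDS(G)+4b/d$. By \cref{claim:recovery}, $P\cup S$ dominates $G$. By \cref{claim:cutoff}, $|P|\leq\rho\MDS(G)$. Adding these bounds gives the stated size $\bigl(\rho+(d+1)(1+\rho)^2\bigr)\MDS(G)+4b/d$.

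The round count is a constant (independent of $\rho$) for the preprocessing, plus twice the LP round complexity, plus one or two rounds to broadcast membership in $A$ over $G$. The total is $O_{d,\beta,b}(\rho^{-4}\log(1/\rho)^2)$. Local computation stays polynomial by \cref{lem:locality,lem:local-lp}, and the rounding step is trivial.

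The one point needing care is simulating the rounding in $G$ rather than in $F$. A vertex $u\in R_1$ must learn whether some $F$-neighbor lies in $A$. Since $F$ is an induced subgraph of $G'$, which is an induced subgraph of $G$, the $F$-neighbors are $G$-neighbors, so one round in $G$ suffices provided each vertex knows membership in $V(F)$. That membership was already computed in the preprocessing. I expect no real obstacle beyond this bookkeeping.
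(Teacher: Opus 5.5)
Your proposal is correct and follows the paper's proof essentially step for step. You compute the fractional solution of \cref{lem:local-lp}, round it in $F$ with requests $R_1$ and hyperedges $\cH'$ via \cref{lem:rounding} (whose hypotheses come from \cref{claim:cutoff}), recover domination of $G$ by \cref{claim:recovery}, and add $|P|\leq\rho\MDS(G)$; your remarks on rounding with $N_F[\cdot]$ and simulating that step in $G$ are sound bookkeeping that the paper leaves implicit.
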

\begin{proof}
Compute the solution of \cref{lem:local-lp}.
We can use the rounding algorithm from \cref{lem:rounding} on $F$ with requests $R_1$ and hyperedges $\cH'$.
\Cref{claim:cutoff} makes sure that all conditions necessary to apply this algorithm are met.
The rounding algorithm returns a set $S$ which dominates $R_1$ and hits $\cH'$. We also have the bound:
\begin{align*}
    |S|\leq(d+1)\sum_{v\in V(F)}x_v+\frac{4b}{d}.
\end{align*}
By \cref{claim:recovery}, $P\cup S$ dominates $G$.
$P$ has size at most $\rho \MDS(G)$ by \cref{claim:cutoff}.
This together with \cref{lem:local-lp} proves the bound on the size of the returned dominating set.
The complexity follows from \cref{lem:local-lp,lem:rounding}.
\end{proof}

The next lemma is the reduction used by Bonamy et al.~\cite[Proposition~D.2]{BonamyEtAl2026}.
We give some more detail to explain why the local computation is polynomial.

\begin{lemma}\label{lem:additive-removal}
Let $\cC$ be a graph class and let $\delta>0$.
Suppose some deterministic $T$-round \LOCAL algorithm returns a dominating set of size at most $\alpha\MDS(G)+\lambda$ on every $G\in\cC$, for some $\alpha\geq1$ and $\lambda\geq0$.
Then it can be turned into an $(\alpha+\delta)$-approximation on $\cC$ that runs in $T+O(1+\lambda/\delta)$ rounds.
If the original algorithm has polynomial local computation, so does the new one for fixed $\lambda,\delta$.
\end{lemma}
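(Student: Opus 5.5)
Set the threshold $h=\max\{1,\lceil\lambda/\delta\rceil\}$. The new algorithm distinguishes graphs with small domination number, which it solves exactly, from graphs with large domination number, for which the additive $\lambda$ is absorbed into the ratio. Indeed, if $\MDS(G)>h$ then $\lambda\leq\delta h<\delta\MDS(G)$. So the original algorithm already returns at most $(\alpha+\delta)\MDS(G)$ on such graphs, and the only task is to detect locally which case we are in.

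The key observation concerns diameter. If $\MDS(G)\leq h$ and $G$ is connected, then $G$ has diameter at most $3h-1$. To see this, take a shortest path between two vertices. Each dominator covers at most three consecutive vertices of the path, since two path vertices at distance at least three along a shortest path cannot share a neighbor. Hence the path has at most $3h$ vertices. The converse test is equally simple. Each vertex $v$ collects its radius-$3h$ view, which takes $3h=O(1+\lambda/\delta)$ rounds. It checks whether its view has no vertex at distance exactly $3h$, meaning the view is the whole graph, and whether this graph has domination number at most $h$.

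If both checks succeed, $v$ knows all of $G$. It outputs its membership in a canonical minimum dominating set, for example the lexicographically smallest by identifiers. Every vertex that succeeds sees the same graph, so they agree. Otherwise $v$ follows the original $T$-round algorithm.

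Consistency needs care. The checks are all-or-nothing: if some vertex sees the whole graph, then so does every vertex within the diameter bound. Conversely, if $\MDS(G)\leq h$, the diameter bound guarantees that every vertex sees the whole graph and succeeds. If $\MDS(G)>h$, no vertex can confirm $\MDS\leq h$, so all vertices run the original algorithm. The output is therefore always either an exact optimum or the original output with the absorbed additive term, and the total running time is $T+O(1+\lambda/\delta)$ rounds.

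The main obstacle is polynomial local computation. Computing $\MDS$ of a view of size $n'$ exactly would be exponential in general. Here, however, we only need to test whether some set of at most $h$ vertices dominates. Enumerating all such subsets takes $O(n'^{h}\cdot n'^2)$ time, which is polynomial for fixed $\lambda$ and $\delta$, since $h$ depends only on them. The canonical minimum set is then found by taking the first success in lexicographic enumeration. Simulating the original algorithm is polynomial by assumption, so the whole procedure has polynomial local computation for fixed $\lambda,\delta$.
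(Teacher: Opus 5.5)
Your proof is correct and follows the paper's argument essentially step by step. It uses the same threshold $h=\max\{1,\lceil\lambda/\delta\rceil\}$, the same shortest-path argument giving diameter at most $3h-1$ when $\MDS(G)\leq h$, the same canonical exact solution on small instances with fallback to the original algorithm, and the same enumeration of sets of size at most $h$ for polynomial local computation.

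The differences are cosmetic. You gather radius $3h$ and make the ``whole graph visible'' test explicit (no vertex at distance exactly $3h$), whereas the paper gathers radius $3h+1$. Also, your ``all-or-nothing'' remark holds only for the combined check, namely seeing the whole graph \emph{and} finding $\MDS\leq h$, not for seeing the whole graph alone. Your subsequent case split on $\MDS(G)\leq h$ versus $\MDS(G)>h$ establishes this correctly.
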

\begin{proof}
Let $h=\max\{1,\lceil\lambda/\delta\rceil\}$.
If $\MDS(G)\leq h$, then $G$ has diameter at most $3h-1$.
Indeed, consider a shortest path. Every vertex of $G$ can only dominate at most $3$ vertices of this path. 
Therefore, a dominating set of size $h$ can only dominate $3h$ vertices of this path.

The algorithm goes as follows.
First, every vertex computes its radius-$(3h+1)$ neighborhood.
If the whole graph is visible, it checks if any vertex set of size at most $h$ dominates the whole graph.
If it is the case, it chooses a canonical minimum one (for instance, the smallest one in the lexicographic order given by the vertex identifiers).
If the whole graph is not visible, or if no vertex set of size at most $h$ dominates the graph, it uses the original algorithm.

The important point is that all vertices agree on this choice.
When $\MDS(G)\leq h$, every vertex sees the entire graph and chooses the same solution.
When $\MDS(G)>h$, no vertex finds a dominating set of size at most $h$, thus they all use the original algorithm.
In the latter case, by definition of $h$, we have $\lambda\leq\delta\MDS(G)$. Thus the $\alpha+\delta$ approximation ratio is proved.
This algorithm uses only $O(h)$ additional rounds, and enumerating sets of size at most $h$ is polynomial for fixed $\lambda,\delta$.
\end{proof}

\subsection{Putting the algorithm together}\label{sec:assembly}

\begin{proof}[Proof of \Cref{thm:main}]
Put
\begin{align*}
    \rho=\min\left\{1,\frac{\eps}{3(d+1)+2}\right\}.
\end{align*}
Use the algorithm of \cref{prop:additive-algorithm}.
Apply \cref{lem:additive-removal} with
\begin{align*}
    \alpha&=\rho+(d+1)(1+\rho)^2,\\
    \lambda&=4b/d,\\
    \delta&=\rho.
\end{align*}
We get an approximation factor of $\alpha+\delta=2\rho+(d+1)(1+\rho)^2$.
We can also bound $(1+\rho)^2\leq1+3\rho$ because $0<\rho\leq1$.
Therefore we get:
\begin{align*}
    2\rho+(d+1)(1+\rho)^2
    \leq d+1+\bigl(3(d+1)+2\bigr)\rho
    \leq d+1+\eps.
\end{align*}
For sufficiently small $\eps$, we have $\rho=\eps/(3(d+1)+2)=\Theta_d(\eps)$, and
\begin{align*}
    \rho^{-4}\log(1/\rho)^2+1+\frac{4b}{d\rho}
    =O_{d,b}\bigl(\eps^{-4}\log(1/\eps)^2\bigr).
\end{align*}
By using this bound, and \cref{prop:additive-algorithm,lem:additive-removal}, we get the stated round complexity.
Polynomial local computation follows from \cref{prop:additive-algorithm,lem:additive-removal}.
\end{proof}

Applying \cref{thm:main} with $b=0$ gives \cref{cor:mad}, which we restate here.
\madcorollary*

\subsection{Tightness}
Amiri and Siebertz~\cite[Theorem~8.7]{AmiriSiebertz2026} show that, for every integer $r\geq1$ and every $\delta>0$, no deterministic constant-round \LOCAL algorithm achieves a $2r+1-\delta$ approximation ratio on the $r$-th powers $P_n^r$ of paths.
For even $d=2r$, the hereditary closure of these path powers has maximum degree at most $d$, hence bounded expansion and maximum average degree at most $d$.
The same power-of-paths graphs were also used before in the author's thesis~\cite[Lemma~V.2.1]{Picavet2026Thesis}.
They have pathwidth at most $r$, hence $\nabla_1\leq r$, proving the lower bound.
Thus the factor $d+1$ is sharp as an infimum for every even integer $d\geq2$.
Note that, additionally, this lower bound answers negatively a question of Lenzen and Wenning~\cite{LenzenWenning2024}. Without their high-girth assumption, there is no constant round algorithm with approximation ratio sublinear in $\nabla_1$.

\section{Applications}

Our algorithm can be applied to many graph classes. In the following, we use it to give almost-tight results on a few important graph classes.

\paragraph{Bounded Euler genus.}
On graphs of bounded Euler genus, the maximum average degree can be larger than $6$ when the genus is non-zero, but only because of an additional additive term.
Thus, we can directly apply \cref{thm:main}.
\begin{corollary}\label{cor:genus}
    For every integer $g$ and every $\eps>0$, there is a deterministic constant-round \LOCAL algorithm that computes a $(7+\eps)$-approximation for \MinDS on graphs of Euler genus at most $g$.
    For $\eps>0$, the algorithm runs in $O_g(\eps^{-4}\log(1/\eps)^2)$ rounds, with polynomial local computation for fixed $g$ and $\eps$.
    No embedding is required.
\end{corollary}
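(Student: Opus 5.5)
The plan is to apply \cref{thm:main} with $d=6$, $b=3g$ (or any constant depending only on $g$), and $\beta$ a constant depending only on $g$. This gives a $(6+1+\eps)=(7+\eps)$-approximation with round complexity $O_{6,\beta,b}(\eps^{-4}\log(1/\eps)^2)=O_g(\eps^{-4}\log(1/\eps)^2)$ and polynomial local computation for fixed $g,\eps$. So it suffices to verify the two hypotheses of \cref{thm:main} for every graph $G$ of Euler genus at most $g$. Since the algorithm of \cref{thm:main} only uses these density assumptions and never an embedding, no embedding is required.

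\textbf{Density condition.} Let $H\subseteq G$. Euler genus is monotone under taking subgraphs, so $H$ has Euler genus at most $g$. For a simple graph with $n\geq3$ vertices embedded in a surface of Euler genus $g$, Euler's formula $n-m+f\geq 2-g$ together with $2m\geq 3f$ gives $m\leq 3n-6+3g\leq 3|V(H)|+3g$. For $n\leq 2$ we have $m\leq 1\leq 3n$ directly; components and non-cellular embeddings only help the inequality, and one can argue componentwise if needed. This yields $|E(H)|\leq\frac62|V(H)|+3g$, so $d=6$ and $b=3g$ work.

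\textbf{Bound on $\nabla_1$.} A depth-one shallow minor of $G$ is in particular a minor of $G$. Euler genus is minor-monotone: deletions and contractions preserve embeddability in the same surface, and removing loops and parallel edges keeps it simple. Hence every such minor $J$ has Euler genus at most $g$, and by the previous paragraph $|E(J)|\leq 3|V(J)|+3g\leq(3+3g)|V(J)|$ for nonempty $J$. So $\beta:=3+3g\geq1$ works.

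The only mildly delicate point is the Euler-formula bound for small or disconnected graphs and for embeddings that are not 2-cell. I would handle it with the standard inequality $m\leq 3n-3\chi$ for simple graphs on a surface of Euler characteristic $\chi=2-g$, valid for $n\geq 3$, and treat $n\leq 2$ separately. Everything else is a direct instantiation of \cref{thm:main}.
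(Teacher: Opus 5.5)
Your proposal is correct and is the paper's proof. It applies \cref{thm:main} with $d=6$, $b=3g$, and $\beta=3+3g$, using Euler's inequality $|E(H)|\leq 3|V(H)|+3g$ together with minor-monotonicity of Euler genus to bound depth-one shallow minors.
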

Note that this also applies to planar graphs. Thus, the coefficient $7+\eps$ is optimal up to the $\eps$ term.
\begin{proof}
By Euler's inequality on graphs of Euler genus at most $g$, we have $|E(H)|\leq3|V(H)|+3g$ for every minor $H$ of $G$.
This also applies to every depth-one shallow minor of $G$.
Apply \cref{thm:main} with $d=6$, $b=3g$ and $\beta=3+3g$.
\end{proof}

\paragraph{Triangle-free graphs of bounded Euler genus.}
\begin{corollary}\label{cor:triangle-free-genus}
    For every integer $g$ and every $\eps>0$, there is a deterministic constant-round \LOCAL algorithm that computes a $(5+\eps)$-approximation for \MinDS on triangle-free graphs of Euler genus at most $g$.
    For $\eps>0$, the algorithm runs in $O_g(\eps^{-4}\log(1/\eps)^2)$ rounds, with polynomial local computation for fixed $g$ and $\eps$.
    No embedding is required.
\end{corollary}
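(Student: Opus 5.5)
The plan is to mirror the proof of \cref{cor:genus}: we apply \cref{thm:main} with $d=4$, so we need to check its density hypotheses for triangle-free graphs of Euler genus at most $g$. Two things must be verified: an additive density bound $|E(H)|\leq 2|V(H)|+b$ for every subgraph $H$ of $G$, and a bound on $\nabla_1(G)$.

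\textbf{Density step.} Let $H$ be a subgraph of $G$. Then $H$ is triangle-free and has Euler genus at most $g$. Since isolated vertices and small components only help, we may argue componentwise or directly.

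Take an embedding of $H$ on a surface of Euler genus at most $g$. Euler's formula gives
\[
|V(H)|-|E(H)|+|F(H)|\geq 2-g
\]
for a connected graph; for disconnected graphs the inequality only improves, after adding edges or counting components suitably. In a triangle-free simple graph with at least $3$ edges, every face of a cellular embedding has boundary walk length at least $4$. Hence $4|F(H)|\leq 2|E(H)|$, and combining this with Euler's formula yields $|E(H)|\leq 2|V(H)|+2g-4\leq 2|V(H)|+2g$.

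Non-cellular embeddings and tiny graphs need a standard remark. One can take a minimum-genus embedding, which is cellular for connected graphs. Alternatively, handle components separately, using that Euler genus is additive over blocks/components and $2g$ absorbs constants. Graphs with at most two edges trivially satisfy the bound. So we may take $d=4$ and $b=2g$.

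\textbf{Shallow-minor step.} Depth-one shallow minors of $G$ need not be triangle-free. They are, however, minors of a graph of Euler genus at most $g$, hence have Euler genus at most $g$. By the general Euler bound already used in \cref{cor:genus}, they satisfy $|E|\leq 3|V|+3g$, so $\nabla_1(G)\leq 3+3g$, and we may set $\beta=3+3g$.

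\textbf{Conclusion.} Applying \cref{thm:main} with $d=4$, $b=2g$, and $\beta=3+3g$ gives the $(4+1+\eps)=(5+\eps)$-approximation. The round complexity is $O_g(\eps^{-4}\log(1/\eps)^2)$ with polynomial local computation, and no embedding is needed since the algorithm only uses these numeric parameters.

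The only delicate point is the face-length argument: ensuring faces have length at least $4$ in a possibly non-cellular or disconnected embedding. This is routine once we restrict to connected components with a cellular minimum-genus embedding and absorb the constant into $b$.
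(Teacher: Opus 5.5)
Your proposal is correct and takes essentially the same route as the paper. Like the paper, you apply \cref{thm:main} with $d=4$, $b=2g$, $\beta=3+3g$, using Euler's inequality with faces of length at least $4$ for the density hypothesis, and minor-closedness of Euler genus plus $|E|\le 3|V|+3g$ for the $\nabla_1$ bound. The only difference is that you spell out the face-counting details (cellular minimum-genus embeddings, additivity over components), which the paper just cites as Euler's inequality.
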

Note that this also applies to triangle-free planar graphs. Thus, the coefficient $5+\eps$ is optimal up to the $\eps$ term~\cite[Section~4]{BonamyEtAl2021Outerplanar}.
\begin{proof}
By Euler's inequality on triangle-free graphs of Euler genus at most $g$, we have $|E(H)|\leq 2|V(H)|+2g$ for every subgraph $H$ of $G$.
The genus does not increase by taking minors, therefore every depth-one shallow minor has Euler genus at most $g$ and density at most $3+3g$, by Euler's inequality on general graphs of Euler genus at most $g$.
Apply \cref{thm:main} with $d=4$, $b=2g$, and $\beta=3+3g$.
\end{proof}

The same approximation ratio can also be achieved by using the recent meta-theorem of Wawrzyniak~\cite[Theorem~6.3 and Sections~7.1--7.3]{Wawrzyniak2026Transfer} on a $(5+\eps)$-approximation algorithm for triangle-free planar graphs.

\paragraph{Further tight bounds.}
The \emph{queue-number} (respectively, \emph{stack-number}) of a graph is the smallest integer $k$ such that its vertices can be linearly ordered and its edges partitioned into $k$ sets, with the property that no two nested (respectively, crossing) edges are in the same set.
Here, given a linear order $<$ on the vertices, edges $uv$ and $xy$ are \emph{nested} when $u<x<y<v$, and \emph{cross} when $u<x<v<y$.
The bounded treewidth, bounded pathwidth, and $K_t$-minor-free classes are minor-closed, but not bounded queue-number and stack-number classes.
Indeed, for $k\geq2$ in the queue-number case and $k\geq3$ in the stack-number case, these classes contain subdivisions of any graph and thus are not minor-closed. They are also not contained in any proper minor-closed class~\cite{NesetrilEtAl2012}.

\begin{corollary}\label{cor:other-classes}
    For a fixed integer $k\geq1$, and every $\eps>0$, there exist deterministic \LOCAL algorithms that achieve the following approximation ratios on the following graph classes:
    \begin{center}
    \renewcommand{\arraystretch}{1.15}
    \begin{tabular}{@{}l|c@{}}
        \hline
        \text{Graph class} & \text{Approximation ratio}\\[2pt]
        \hline
        $\text{Queue-number at most }k$ & $4k+1+\eps$\\
        $\text{Stack-number at most }k$ & $2k+3+\eps$\\
        $\text{Treewidth or pathwidth at most }k$ & $2k+1+\eps$\\
        $K_k\text{-minor-free graphs}$ & $2k-3+\eps\text{ for }3\leq k\leq 9$\\
        \hline
    \end{tabular}
    \end{center}
    The algorithms run in $O_k(\eps^{-4}\log(1/\eps)^2)$ rounds, with polynomial local computation for fixed $k,\eps$.
    No layout or decomposition is required as input.
    All these algorithms are tight up to the $\eps$ term.
\end{corollary}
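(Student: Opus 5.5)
The plan is to derive each row of the table as a direct instance of \cref{thm:main}. For each class I need three inputs: a value $d$ with $|E(H)|\leq\frac d2|V(H)|+b$ for every subgraph $H$, a constant $b$, and a bound $\beta$ on $\nabla_1$. With these in hand, \cref{thm:main} immediately gives the ratio $d+1+\eps$, the round complexity $O_k(\eps^{-4}\log(1/\eps)^2)$ and polynomial local computation. It requires no layout or decomposition, since the algorithm only uses the graph. Tightness then comes from the $2r+1$ lower bound on powers of paths $P_n^r$, where each class is shown to contain the relevant path powers.

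I will take the edge bounds from the literature, case by case.
\begin{itemize}
\item \textbf{Queue-number at most $k$.} Each queue is a graph of queue-number one, and such graphs have at most $2n-3$ edges. So $|E(H)|\leq 2k|V(H)|$, giving $d=4k$ and $b=0$. For $\nabla_1$, I will use the known fact (Nešetřil--Ossona de Mendez--Wood) that bounded queue-number classes have bounded expansion, so $\beta=\beta(k)$.
\item \textbf{Stack-number at most $k$.} A $k$-page book embedding has at most $(k+1)n-3k$ edges, using $2n-3$ for the first page and $n-3$ for each further page. So $d=2k+2$ and $b=0$, and bounded expansion again gives $\beta$.
\item \textbf{Treewidth or pathwidth at most $k$.} Such graphs have at most $kn-\binom{k+1}2$ edges, so $d=2k$. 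Since the class is minor-closed, $\nabla_1\leq k$.
\item \textbf{$K_k$-minor-free with $3\leq k\leq 9$.} Mader's theorem gives $|E|\leq(k-2)n-\binom{k-1}2$, so $d=2k-4$, and minor-closedness gives $\beta=k-2$.
\end{itemize}

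The tightness claims should come from path powers. $P_n^r$ has pathwidth $r$, so $2k+1$ is tight for treewidth and pathwidth. $P_n^{2k}$ has queue-number at most $k$: with the natural order, edges of different lengths can share a queue as long as none nests inside another. So $4k+1$ is tight for queue-number. $P_n^{k+1}$ has stack-number at most $k$, which I expect to hold via a known result, and this gives $2k+3$. For $K_k$-minor-freeness, $P_n^{k-2}$ has treewidth $k-2$ and so is $K_k$-minor-free, giving $2(k-2)+1=2k-3$.

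The main obstacle is this tightness part, specifically the claims that $P_n^{2k}$ has queue-number $k$ and $P_n^{k+1}$ has stack-number $k$. I would first try to verify them directly with the identity ordering, grouping edges of lengths $\{2i-1,2i\}$ into queue $i$ for the queue case. If a direct check fails, I would cite the literature on the queue-number and stack-number of bandwidth-bounded graphs.
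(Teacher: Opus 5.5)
Your proposal is correct and follows the paper's proof essentially step for step: the same edge-count bounds for each class, bounded expansion (Ne\v{s}et\v{r}il et al.) or minor-closedness to bound $\nabla_1$, and the path powers $P_n^{2k}$, $P_n^{k+1}$, $P_n^{k}$, $P_n^{k-2}$ for tightness, using exactly the paper's length-pairing queue layout and the bandwidth result of Swaminathan et al.\ for the stack case. One small correction: Mader's theorem only covers $k\leq 7$, and for $k=8,9$ the relevant results (J\o rgensen, Song--Thomas) give $6n-20$ and $7n-27$, which is one more than $(k-2)n-\binom{k-1}{2}$ because of exceptional graphs such as $K_{2,2,2,2,2}$; this still yields $\mad(G)\leq 2k-4$, which is all you need.
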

\begin{proof}
All graphs of bounded queue- or stack-number have bounded expansion~\cite{NesetrilEtAl2012}.
Their maximum average degrees are at most $4k$ and $2k+2$, respectively.
Indeed, a non-nested edge set has at most $2n-3$ edges, and a graph of stack-number $k$ has at most $(k+1)n-3k$ edges for $n\geq3$~\cite{NesetrilEtAl2012,BernhartKainen1979}.
Graphs of treewidth at most $k$ are $k$-degenerate, as are their minors, so $\mad(G)\leq2k$ and $\nabla_1(G)\leq k$.

For $K_k$-minor-free graphs with $3\leq k\leq9$, results of Dirac~\cite{Dirac1964} (for $3\leq k\leq5$), Mader~\cite{Mader1968} (for $k=6,7$), J\o rgensen~\cite{Jorgensen1994} (for $k=8$), and Song--Thomas~\cite{SongThomas2006} (for $k=9$) give $\mad(G)\leq2k-4$. See~\cite[Theorems~1.1 to 1.3]{SongThomas2006}.
Since a minor of a $K_k$-minor-free graph is also $K_k$-minor-free, this gives $\nabla_1(G)\leq k-2$.

Apply \cref{cor:mad} to get the results.

To prove the tightness up to $\eps$, use the graphs $P_n^{2k}$, $P_n^{k+1}$, $P_n^k$, and $P_n^{k-2}$, respectively.
The first graph has $k$ queues: group the edges of lengths $2i-1$ and $2i$ in its natural order, and the second has $k$ stacks~\cite{SwaminathanEtAl1995}.
One can easily provide a path-decomposition of $P_n^r$ of width $r$. This handles the last two classes.
\end{proof}

\section{Open questions}

We are very close to fully understanding the interaction between \MinDS and the \LOCAL model. In this section, we give a list of open questions that are left open by our work.

\paragraph{The fractional optimum.}
\begin{question}\label{q:fractional}
    Can the $(d+1+\eps)\MDS(G)$ bound in \cref{cor:mad} be strengthened to $(d+1+\eps)\MDS_f(G)$ in the bounded $\nabla_1$ case?
\end{question}
To get this result, it would suffice to replace the construction of the set $C$ with a fractional version.
We would start with a minimum fractional dominating set and increase its total weight only a little to satisfy the hyperedge constraints.
Amiri and Siebertz get this exact result for the smaller case of proper minor-closed classes~\cite{AmiriSiebertz2026}.

\paragraph{Weaker sparsity assumptions.}
\begin{question}\label{q:weaker-density}
    Which structural parameter weaker than bounded $\nabla_1$ suffices to get a deterministic constant-round \LOCAL $(d+1+\eps)$-approximation on graphs with $\mad(G)\leq d$?
\end{question}
Note that one cannot simply drop the bound on $\nabla_1$, because of the lower bound result of Dory, Ghaffari, and Ilchi on graphs of arboricity two~\cite{DoryEtAl2024}.

\paragraph{The exact approximation value.}
The matching lower bounds cover even integers $d$. Can this be improved to other values of $d$?
\begin{question}\label{q:tightness}
    For every real $d\geq2$, is there a class of graphs $\cC$ with $\mad(\cC)\leq d$ and bounded $\nabla_1(\cC)$ on which no constant-round algorithm achieves a ratio below $d+1$?
\end{question}
Our algorithm's running time depends heavily on $\eps$. This leads to the following question.
\begin{question}\label{q:exact-ratio}
    Can the $\eps$ be removed from the $d+1+\eps$ approximation ratio in \cref{cor:mad}?
\end{question}
Proving these two would give truly tight results for graphs of bounded $\nabla_1$.
Note that the simpler question about the existence of a $7$-approximation on planar graphs remains open~\cite[Section~9]{AmiriSiebertz2026}.

\paragraph{Small messages.} Our current algorithm works in the \LOCAL model. Therefore, the following question is natural:
\begin{question}\label{q:congest}
    Does \cref{cor:mad} hold in the \CONGEST model?
\end{question}
Note that our algorithm cannot be extended directly, as it uses very large messages to gather neighborhoods and enumerate hyperedge constraints.
Even achieving an approximation ratio of $7+\eps$ in constant \CONGEST rounds on planar graphs is open, see Amiri and Siebertz~\cite[Section~9]{AmiriSiebertz2026}.

\section*{Acknowledgments}

We thank Yi-Jun Chang, with whom this project started, for his many helpful comments and suggestions on the presentation of this article. We also thank Marthe Bonamy and Cyril Gavoille, for the discussions that led to the conjecture of the $\mad(G)+1$ factor in 2021.

\section*{Declaration of AI use}

\begin{table}[H]
\centering
\begin{tabular}{@{}l@{\qquad}c@{}}
\textbf{Activity} & \textbf{AI used} \\
Asking the research question & No \\
Coming up with the approach & Yes \\
Proof development & Yes \\
Writing and exposition & No \\
Checking for bugs & No \\
Other supporting tasks & Yes \\
\end{tabular}
\caption{Summary table about our use of AI, following the recommendations of the Working Group Report \emph{AI and TCS: The Next Six Months}. For more information, see \url{https://simons.berkeley.edu/AITCSreport}.}
\end{table}

We used AI tools to assist in finding results, proofreading, \LaTeX{} typesetting, and suggesting better English phrasing. The textual content was fully written by the author.
The main theorem was proved by AI with little input.
The author modified the proof to extend it to bounded genus graphs: in particular, by allowing for an additive bound in the density, and then by bruteforcing on small graphs.
The conjecture of an approximation factor of $\mad(G)+1$ was formulated by humans and had already been discussed by Marthe Bonamy, Cyril Gavoille, and the author around 2021.
The human author has independently verified the correctness and originality of all results in the paper and takes full responsibility for its contents.

\bibliographystyle{alphaurl}
\begingroup
\raggedright
\bibliography{bibliography}
\endgroup

\end{document}